\documentclass[letterpaper,reqno,11pt]{article}
\usepackage[margin=1.0in]{geometry}
\usepackage{color,latexsym,amsmath,amssymb}
\usepackage{fancyhdr}
\usepackage{amsthm}
\usepackage[linesnumbered,lined,boxed,commentsnumbered,ruled]{algorithm2e}
\usepackage{dsfont}
\usepackage{graphicx}
\usepackage[hidelinks=true]{hyperref}
\usepackage{setspace}
\usepackage{lmodern}
\usepackage{thmtools}
\usepackage{thm-restate}
\usepackage[numbers]{natbib}
\usepackage{subcaption}
\usepackage[inline]{enumitem}
\usepackage{nicematrix}
\usepackage{cleveref}
\usepackage{booktabs}
\usepackage{multirow}
\usepackage{makecell}

\usepackage{tikz}
\usetikzlibrary{bbox}
\usetikzlibrary{fit}
\usetikzlibrary{arrows.meta}

\usepackage{pgfplots}
\pgfplotsset{compat=1.18}
\usepgfplotslibrary{fillbetween}

\usepackage{listings}
\crefname{subsection}{subsection}{subsections}
\Crefname{subsection}{Subsection}{Subsections}

\allowdisplaybreaks

\newcommand{\RR}{\mathbb{R}}

\newcommand{\ZZ}{\mathbb{Z}}
\newcommand{\QQ}{\mathbb{Q}}
\newcommand{\NN}{\mathbb{N}}
\newcommand{\T}{\mathsf{T}}

\newcommand{\maximize}{\textrm{maximize}}
\newcommand{\subjectto}{\textrm{subject to}}

\newcommand{\argmax}{\mathrm{arg\,max}}
\newcommand{\OPT}{\mathsf{OPT}}
\newcommand{\LPOPT}{\mathsf{LPOPT}}

\DeclareMathOperator{\LP}{\mathsf{LP}}

\DeclareMathOperator{\supp}{\mathsf{supp}}

\DeclareMathOperator{\prev}{\mathsf{prev}}
\DeclareMathOperator{\diag}{\mathsf{diag}}

\title{Bicriteria Approximation Algorithms for Demand Matching}
\author{Yuchong Pan \\ \small MIT \\ \small \tt yuchong@mit.edu \and Michel X.\ Goemans \\ \small MIT \\ \small \tt goemans@math.mit.edu}
\date{}
\newtheorem{theorem}{Theorem}[section]
\newtheorem{lemma}[theorem]{Lemma}

\newtheorem{corollary}[theorem]{Corollary}

\theoremstyle{definition} 

\begin{document}

\maketitle

\begin{abstract}
    The demand matching problem generalizes both the knapsack problem and the $b$-matching problem. In this problem, each edge of a graph has a demand and a weight, each vertex has a capacity, and the goal is to find a maximum weight subset of edges whose total incident demand at every vertex does not exceed its capacity. We study $(\alpha, \beta)$-bicriteria approximation algorithms, which return a solution of weight at least $1/\alpha$ times the optimum while allowing an additive capacity violation of at most $\beta$ times the maximum edge demand.

We give an iterative relaxation algorithm for the demand matching problem that exploits a structural characterization of strictly fractional extreme points of the natural LP relaxation, which reduces the residual rounding problem to odd-cycle instances. Combined with a better-of-two rounding strategy, this yields $(7/6, 1)$- and $(1, 1)$-bicriteria approximation algorithms for general and bipartite graphs, respectively. We further generalize this approach to obtain a parametric family of algorithms, including a $(1, 4/3)$-bicriteria approximation. Separately, for the more general $k$-hypergraph demand matching problem, we give a greedy, combinatorial $(k, 1)$-bicriteria approximation algorithm.

We complement these algorithmic results with matching lower bounds relative to the natural LP relaxation for $\beta = 0$ and all $\beta \geq 1$, completely characterizing the trade-off between weight approximation and additive capacity violation in this range.
\end{abstract}

\section{Introduction}

We consider the \textsc{Demand Matching} problem introduced by \citet{shepherd2007demand}. Let $G = (V, E)$ be a graph. Let $b : V \to \QQ_{\geq 0}$ denote \emph{vertex capacities}. Let $d : E \to \QQ_{> 0}$ and $w : E \to \QQ_{\geq 0}$ denote \emph{edge demands} and \emph{edge weights}, respectively. We say that a subset $M \subseteq E$ is a \emph{demand matching} if $d(M \cap \delta(v)) \leq b(v)$ for all $v \in V$, where $\delta(v)$ denotes the set of edges of $G$ incident to $v$.\footnote{Given a set $\Sigma$ and a function $\varphi : \Sigma \to \QQ$, we write $\varphi(S) := \sum_{s \in S} \varphi(s)$ for all $S \subseteq \Sigma$.} To emphasize the vertex capacities $b$ and the edge demands $d$, we might say that $M$ is \emph{$(b,d)$-feasible}. The objective of the \textsc{Demand Matching} problem is to find a demand matching $M$ such that $w(M)$ is maximized. A natural LP relaxation of the \textsc{Demand Matching} problem is the following:
\begin{alignat}{4}
    \maximize \qquad && \sum_{e \in E} x_e w(e) & \tag{DM-LP} \label{eq:lp} \\
    \subjectto \qquad && \sum_{e \in \delta(v)} x_e d(e) & \leq b(v) && \qquad \forall v \in V, \notag \\
    && x_e &\in [0, 1] && \qquad \forall e \in E. \notag
\end{alignat}
Throughout this paper, we assume that, for each $e = uv \in E$, we have $d(e) \leq \min\{ b(u), b(v) \}$. This \emph{no-bottleneck assumption} can be made without loss of generality without affecting the optimum of the original instance, although it may affect the optimum of the LP relaxation. We also assume, without loss of generality, that $G$ does not contain self-loops. Multiple edges are allowed between the same pair of vertices.

A well-known special case of the \textsc{Demand Matching} problem is the case where $d(e) = 1$ for all $e \in E$. This is the familiar \textsc{Maximum Weight $b$-Matching} problem, which admits several polynomial-time algorithms \cite{marsh1979matching,padberg1982odd,gabow1983efficient,anstee1987polynomial,gerards1995matching}. In contrast, when $G$ has only two vertices, the \textsc{Demand Matching} problem reduces to the \textsc{Knapsack} problem. Hence, the \textsc{Demand Matching} problem is $\mathsf{NP}$-hard. Moreover, \citet{shepherd2007demand} showed that even the cardinality case of the \textsc{Demand Matching} problem (i.e., $w(e) = 1$ for all $e \in E$) is $\mathsf{MAXSNP}$-hard; in other words, there exists $\varepsilon > 0$ such that the \textsc{Demand Matching} problem admits no $(1+\varepsilon)$-approximation algorithm unless $\mathsf{P} = \mathsf{NP}$.

The \textsc{Demand Matching} problem has been extensively studied, and several approximation algorithms are known. For $\alpha \geq 1$, we say that an algorithm is an \emph{$\alpha$-approximation} for the \textsc{Demand Matching} problem if, given an instance $(G = (V, E), b, d, w)$ with optimum $\OPT$, it outputs in polynomial time a $(b, d)$-feasible demand matching $M \subseteq E$ such that $w(M) \geq \OPT/\alpha$. We say that an approximation algorithm for the \textsc{Demand Matching} problem is \emph{with respect to \eqref{eq:lp}} if $\OPT$ in the above definition can be replaced by the optimum of \eqref{eq:lp}. \citet{shepherd2007demand} gave a $3.264$-approximation algorithm for general graphs and a $2.764$-approximation algorithm for bipartite graphs, both with respect to \eqref{eq:lp}, and also showed that the \textsc{Demand Matching} problem is $\mathsf{APX}$-hard even in bipartite graphs. \citet{parekh2011iterative} improved the upper bound on the integrality gap of \eqref{eq:lp} to $3$, which matches the lower bound by \citet{shepherd2007demand}. For bipartite graphs, \citet{singh2012nearly} improved the upper bound on the integrality gap to $2.709$, with a lower bound of $2.699$.

\subsection{Main Contribution}

The motivation of this work is to allow a (small) additive violation on each vertex capacity, in order to reduce the approximation ratio on the weight. In other words, we give \emph{resource augmentation} results, where we compare a solution that may use \emph{augmented} capacities against the optimum under the original capacities. For $\alpha \geq 1$ and $\beta \geq 0$, we say that an algorithm is an \emph{$(\alpha, \beta)$-bicriteria approximation} for the \textsc{Demand Matching} problem if, given an instance $(G = (V, E), b, d, w)$ with optimum $\OPT$, it outputs in polynomial time a $(b + \beta d_{\max}, d)$-feasible demand matching $M \subseteq E$ such that $w(M) \geq \OPT/\alpha$, where $d_{\max} := \max_{e \in E} d(e)$.\footnote{Given a set $\Sigma$, a function $\varphi : \Sigma \to \QQ$, and a constant $\gamma \in \QQ$, we use $\varphi + \gamma$ to denote the function defined by $s \mapsto \varphi(s) + \gamma$ for all $s \in \Sigma$.} Resource augmentation results have been established for some other combinatorial optimization problems including the \textsc{Minimum Cost Bounded Degree Spanning Tree} problem \cite{konemann2000matter,konemann2003primal,ravi2006delegate,goemans2006minimum, chaudhuri2009would,singh2015approximating}, the \textsc{Single-Source Unsplittable Flow} problem \cite{dinitz1999single,kolliopoulos2001approximation,skutella2002approximating,martens2007convex,morell2022single,traub2024single,aleman2025unsplittable,majthoub2025integer}, and the \textsc{$k$-Edge-Connected Spanning Subgraph} problem \cite{hershkowitz2024ghost,nutov2025bicriteria,kumar2025almost}. These works have introduced a variety of techniques for obtaining resource augmentation results.

To the best of our knowledge, very few bicriteria approximation results are known for the \textsc{Demand Matching} problem. One example is the work of \citet{ahmadian2017further}, who studied a generalization of the \textsc{Demand Matching} problem (with asymmetric demands at the two endpoints for each edge and with a matroid constraint) and used the iterative relaxation technique to obtain a $(1, 2)$-bicriteria approximation algorithm.

The main contribution of this paper is a collection of bicriteria approximation algorithms for the \textsc{Demand Matching} problem. \Cref{tab:summary} provides a partial summary of our algorithmic results, together with the previously best known LP-relative approximation results for comparison. The remainder of this section states our main theorems and highlights the key technical innovations underlying them.

\begin{table}[ht]
    \crefname{theorem}{thm.}{thms.}
    \Crefname{theorem}{Thm.}{Thms.}
    \centering
    \begin{tabular}{c|c|c|c}
        Problem & Best known LP-rel.\ approx.\ & This paper & Technique \\
        \hline \hline
        \multirow{2}{*}{\textsc{DM}} & \multirow{2}{*}{$3$ \cite{parekh2011iterative} (LB: $3$ \cite{shepherd2007demand})} & $(7/6, 1)$ (\Cref{thm:better2}) & \multirow{3}{*}{\makecell{iterative relaxation\\+ better-of-two}} \\
        \cline{3-3}
        & & $(1,4/3)$ (\Cref{thm:parametric-nba}) & \\
        \cline{3-3}
        \cline{1-3}
        bipartite \textsc{DM} & $2.709$ \cite{singh2012nearly} (LB: $2.699$ \cite{singh2012nearly}) & $(1, 1)$ (\Cref{thm:better2}) & \\
        \hline
        \multirow{2}{*}{\textsc{$k$-HDM}} & $2k$ \cite{parekh2011iterative,parekh2014generalized} & \multirow{2}{*}{$(k, 1)$ (\Cref{thm:main})} & \multirow{2}{*}{greedy} \\
        & (LB: $2(k - 1 + 1/k)$ \cite{parekh2011iterative}) & &
    \end{tabular}
    \caption{A partial summary of main algorithmic results in this paper. We use a number $\alpha$ to denote an $\alpha$-approximation algorithm, and a pair $(\alpha, \beta)$ to denote an $(\alpha, \beta)$-bicriteria approximation algorithm. Here, \textsc{$k$-HDM} denotes the \textsc{$k$-Hypergraph Demand Matching} problem, \textsc{DM} denotes the \textsc{Demand Matching} problem, ``bipartite'' denotes the special case where the graph is bipartite, and ``LB'' denotes the best known lower bound with respect to \eqref{eq:lp} in the first two rows and with respect to \eqref{eq:hdm-lp} in the third row.}
    \label{tab:summary}
\end{table}

\subsubsection{$(7/6, 1)$-Bicriteria Approximation Algorithm with Iterative Relaxation}

Our first result is a $(7/6, 1)$-bicriteria approximation algorithm with respect to \eqref{eq:lp} for the \textsc{Demand Matching} problem.

\begin{theorem} \label{thm:better2}
    For the \textsc{Demand Matching} problem, there is a $(7/6, 1)$-bicriteria approximation algorithm with respect to \eqref{eq:lp}. For an instance $(G, b, d, w)$ with $G$ bipartite, this algorithm gives a $(1, 1)$-bicriteria approximation.
\end{theorem}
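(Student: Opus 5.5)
The plan is iterative relaxation on \eqref{eq:lp}, combined with a structural analysis of what remains once no further progress is possible.

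\textbf{Iterative loop.} We maintain a current instance with residual capacities $b'$ and a partial solution $M$, and compute an extreme point $x$ of \eqref{eq:lp} on the residual instance. We then apply three rules.
\begin{enumerate*}[label=(\roman*)]
\item Delete every edge with $x_e = 0$.
\item Add every edge with $x_e = 1$ to $M$ and decrease $b'$ at both endpoints by $d(e)$.
\item If some vertex $v$ has few fractional incident edges, drop its capacity constraint.
\end{enumerate*}
For rule (iii), a safe condition is that the fractional demand at $v$ satisfies $\sum_{e \in \delta(v)} d(e)(1-x_e) \le d_{\max}$. This certainly holds if $v$ has at most one fractional edge. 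After the drop, rounding all remaining edges at $v$ up to $1$ exceeds the residual capacity by at most $d_{\max}$.

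The LP value never decreases across iterations, since the restriction of $x$ stays feasible. Every vertex whose constraint is dropped ends up violated by at most $d_{\max}$. The standard counting argument applies: in a strictly fractional extreme point, the number of fractional edges equals the number of tight, linearly independent vertex constraints. Each component of the support therefore has at most one cycle, so it is a tree or a unicyclic graph. A tree has a leaf, which has a single fractional edge, so rule (iii) applies. In a unicyclic component, the dangling trees have leaves unless the component is exactly a cycle in which every vertex is tight and has degree $2$.

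For an even cycle, the incidence structure is singular in the usual alternating sense. For bipartite cycles the system $\sum x_e d(e) = b'(v)$ has a nontrivial kernel, so the point is not extreme. Hence odd cycles are the only obstruction, and they do not occur in bipartite graphs. For bipartite graphs the algorithm therefore ends with an integral solution of weight at least $\LPOPT$ and violation at most $d_{\max}$, which gives the $(1,1)$ claim.

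\textbf{Odd cycles.} In general graphs, the residual is a disjoint union of odd cycles $C$. Every vertex of $C$ is tight with exactly two fractional edges, and each $x_e \in (0,1)$. On each cycle we use better-of-two rounding.

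\emph{Option A:} round every edge up. Each vertex then sees at most two extra edge-demands of total fractional excess $d(e)(1-x_e)+d(f)(1-x_f)$. We need this to be at most $d_{\max}$, which is not guaranteed, so instead we drop one edge.

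\emph{Concrete plan:} for each edge $e$ of $C$, consider the solution that takes all edges of $C$ except $e$. Every vertex not incident to $e$ has both incident edges chosen. Tightness gives excess $d(e_1)(1-x_{e_1}) + d(e_2)(1-x_{e_2})$. This is at most $d_{\max}$ provided $x_{e_1} + x_{e_2} \ge 1$, roughly.

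Since the cycle is odd and tight, the case analysis splits on whether $\sum_{e \in C} x_e$ is large. We compare two candidates:
\begin{enumerate*}[label=(\alpha)]
\item the best ``all but one edge'' solution, whose weight is at least $(1 - 1/|C|)\, w(C)$;
\item the best near-matching alternative that takes alternate edges and fits within capacity.
\end{enumerate*}

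The target is $\max(\alpha,\beta) \ge \tfrac{6}{7} \sum_{e\in C} x_e w(e)$. The worst case should be a triangle, where $|C| = 3$ and the loss is $1/3$ for option $(\alpha)$. Adding a single-edge option with an averaging argument, for example a convex combination with weight $6/7$, should give $7/6$.

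\textbf{Main obstacle.} The hard step is proving that dropping one edge from a tight odd cycle violates each capacity by at most $d_{\max}$. This requires the structural characterization of strictly fractional extreme points on odd cycles. I would try to show, via the unique solution of the tight system on an odd cycle, that consecutive values satisfy $x_{e_1} + x_{e_2} \ge 1$ at all vertices except possibly one. I would then choose the edge to drop at that exceptional vertex, and verify the weight bound by an LP averaging over the choice of the dropped edge.
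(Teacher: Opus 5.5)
Your iterative-relaxation phase and the bipartite $(1,1)$ argument are fine and essentially match the paper.

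\textbf{The gap.} The gap is in the odd-cycle step, and it is structural, not a missing verification. Every candidate you consider keeps the previously fixed edges and measures violation against the residual capacity. No rounding of that form achieves $7/6$.

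Consider a triangle $v_1v_2v_3$ whose edges have $d=w=1$, plus a pendant edge $v_iu_i$ at each $v_i$ with $d=w=1/100$, and all capacities $1$. The unique LP optimum puts the pendant edges at $1$ and the triangle edges at $0.495$, so $\LPOPT=1.515$. The pendant edges get fixed, and the residual is the triangle with $\tilde b\equiv 0.99$.
\begin{itemize}
\item Here $x_{e_1}+x_{e_2}=0.99<1$ at every vertex, so the structural claim you hoped to prove is false.
\item Dropping one triangle edge overloads the shared vertex, since $2.01>b(v)+d_{\max}=2$.
\item More decisively, any $(b+d_{\max},d)$-feasible set containing the pendant edges has at most one triangle edge. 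Its weight is therefore at most $1.03$, a ratio of about $1.47$, approaching $3/2$ as the pendant demand and weight shrink.
\end{itemize}
No averaging over dropped edges can fix this.

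\textbf{The missing idea.} Discard the fixed edges in the second candidate: take $M_2$ to be exactly the edges of the residual odd cycles. This is $(b+d_{\max},d)$-feasible by the no-bottleneck assumption alone. A cycle vertex $v$ sees two edges, each of demand at most $\min\{b(v),d_{\max}\}$. In the example, $w(M_2)=3$.

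The other candidate $M_1$ is the fixed edges plus, on each cycle $C$, a maximum weight $(\tilde b+d_{\max},d)$-feasible demand matching $M_1^C$. This is computable via a reduction to $\hat b$-matching with $\hat b\in\{1,\infty\}$.

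The comparison is then global rather than per cycle. The paper proves $\sum_{e\in E_C}x_ew(e)\le w(M_1^C)+w(E_C)/6$ as follows.
\begin{itemize}
\item At each vertex where $M_1^C$ cannot use both edges, i.e.\ $d(\delta(v))>\tilde b(v)+d_{\max}$, LP feasibility forces $x(\delta(v))<1$.
\item So $x$ is feasible for the fractional matching LP with constraints only at those vertices.
\item Half-integrality bounds that LP by $\max\{w(M_1^C),w(E_C)/2\}$.
\item Finally, $w(M_1^C)\ge w(E_C)/3$ because $C$ splits into three matchings.
\end{itemize}
Summing over cycles and adding the fixed edges gives $\LPOPT\le w(M_1)+w(M_2)/6\le\frac{7}{6}\max\{w(M_1),w(M_2)\}$.
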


The algorithm in \Cref{thm:better2} is based on the iterative relaxation framework. The iterative rounding technique for designing approximation algorithms was pioneered by \citet{jain2001factor} and later extended by \citet{singh2015approximating} into what is now known as iterative relaxation; see \citet{lau2011iterative} for a comprehensive overview. At a high level, an iterative relaxation algorithm repeatedly solves an appropriately chosen LP relaxation of a combinatorial optimization problem and gradually constructs a desired solution. In each iteration, progress is made by either eliminating an edge with LP value $0$, or fixing an edge with integral LP value, or dropping a nearly satisfied constraint, thereby simplifying the LP while approximately preserving feasibility. The framework of iterative relaxation has been successfully applied to a wide range of bicriteria approximation algorithms.

Our bicriteria approximation algorithms in \Cref{thm:better2} make use of a structural lemma for a \emph{strictly fractional} extreme point of \eqref{eq:lp} and its residual versions, i.e., an extreme point $x^\star$ such that $0 < x_e^\star < 1$ for every edge $e$ in the residual graph. Earlier, \citet{ahmadian2017further} proved a slightly weaker structural result for a strictly fractional extreme point $x^\star \in \RR^E$ of a residual LP, namely that $|\delta(v)| \leq x^\star(\delta(v)) + 2$ for \emph{some} nonrelaxed vertex $v$, and used this structure to design a $(1, 2)$-bicriteria approximation algorithm with the iterative relaxation framework. (Their result applies to a generalization of the \textsc{Demand Matching} problem with asymmetric demands at the two endpoints for each edge and an additional matroid constraint.)

In our setting, we obtain a more precise characterization of a strictly fractional extreme point of a residual LP arising in the iterative relaxation framework. In particular, after all integral edges have been fixed and all constraints at vertices of residual degree at most one have been dropped, the support graph of the remaining strictly fractional extreme point must be a vertex-disjoint union of odd cycles (with all other vertices isolated). We formally state and prove this characterization in \Cref{lem:extreme-point}. Through the iterative relaxation framework, this characterization reduces the residual rounding problem to a collection of odd-cycle instances. To prove \Cref{thm:better2}, we then apply the better of the following two rounding strategies:
\begin{itemize}
    \item[(S1)] Take a maximum weight demand matching on each odd cycle, with each vertex capacity being the residual capacity \emph{relaxed by $d_{\max}$}.
    \item[(S2)] Take all edges of the odd cycles, and discard all previously fixed edges.
\end{itemize}

The analysis of odd-cycle instances relies on two key technical ingredients. First, we observe that the special case of the \textsc{Demand Matching} problem on a cycle reduces to the well-studied \textsc{Maximum Weight $b$-Matching} problem under the no-bottleneck assumption. It is known that the latter problem can be solved in polynomial time. Second, and more importantly, we relate an optimal extreme point solution of \eqref{eq:lp} to an optimal extreme point solution of the fractional matching LP. This connection allows us to exploit structural properties of the fractional matching polytope to establish an inequality relating the following three quantities: the optimum of the demand matching LP, the maximum weight of a demand matching with relaxed capacities, and the total weight of all edges. The details are presented in \cref{sec:cycle}.

We remark that the same argument extends the $(7/6, 1)$-bicriteria approximation algorithm, with the same guarantee, to a generalization of the \textsc{Demand Matching} problem in which each edge $e = uv \in E$ is allowed to have different demand values $d^u(e)$ and $d^v(e)$ at its two endpoints, and the output $M \subseteq E$ must satisfy $d^v(M \cap \delta(v)) := \sum_{e \in M \cap \delta(v)} d^v(e) \leq b(v)$ for all $v \in V$. Indeed, in this generalized setting, the support graph of any strictly fractional extreme point of the residual LP is a vertex-disjoint union of cycles, which may be odd or even. This structural property suffices to obtain the $(7/6, 1)$-bicriteria approximation guarantee. Note, however, that the $(1, 1)$-bicriteria approximation algorithm for bipartite graphs does not extend to this more general setting. We omit the details in this paper.

\subsubsection{Parametric Bicriteria Approximation Algorithms and Matching Lower Bounds}

Two natural trade-off questions arise from \Cref{thm:better2}:
\begin{enumerate*}[label=(\roman*), itemsep=0pt]
    \item If an additive capacity violation of $\beta d_{\max}$ is allowed at each vertex for some $\beta \geq 0$, how well can the optimal weight be approximated?
    \item To achieve the optimal weight (or more generally, an approximation ratio $\alpha \geq 1$ on the weight), how much additive capacity violation is needed at each vertex?
\end{enumerate*}

Indeed, the algorithm in \Cref{thm:better2} can be generalized to a family of \emph{parametric} bicriteria approximation algorithms which characterize the trade-off between the weight approximation ratio and the additive violation of capacity constraints. This generalization relies on the same core ideas as the algorithm in \Cref{thm:better2}. In particular, it uses the iterative relaxation framework, and its analysis follows from a parametric generalization of the analysis of the \textsc{Demand Matching} problem on a cycle.

\begin{theorem} \label{thm:parametric-nba}
    For the \textsc{Demand Matching} problem, for $\beta \geq 1$, there is an $(\alpha^\mathrm{UB}(\beta), \beta)$-bicriteria approximation algorithm with respect to \eqref{eq:lp}, where
    \begin{equation} \label{eq:alpha-ub}
        \alpha^\mathrm{UB}(\beta) := \left\{
        \begin{array}{ll}
            (10 - 3\beta)/6 & \text{if $1 \leq \beta \leq \frac{4}{3}$}, \\
            1 & \text{otherwise}.
        \end{array}
    \right.
    \end{equation}
    In particular, there is a $(1, 4/3)$-bicriteria approximation algorithm for the \textsc{Demand Matching} problem.
\end{theorem}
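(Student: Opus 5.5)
The plan is to run the same iterative relaxation algorithm as for \Cref{thm:better2}, change only the final better-of-two step, and isolate the whole parametric dependence in a single inequality about odd cycles. Concretely, we repeatedly solve the residual version of \eqref{eq:lp}, delete edges with $x_e=0$, fix edges with $x_e=1$ (reducing residual capacities accordingly), and drop constraints at vertices of residual degree at most one. By the structural characterization in \Cref{lem:extreme-point}, this stops with a strictly fractional extreme point $x^\star$ whose support is a vertex-disjoint union of odd cycles $C_1,\dots,C_m$. Let $F$ be the weight of the fixed edges, $W:=\sum_i w(C_i)$, and $b'$ the residual capacities. Standard iterative relaxation bookkeeping gives $F+\sum_i x^\star(C_i)\cdot w\ge \LPOPT$. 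Dropping a constraint at a vertex of residual degree at most one costs at most one $d_{\max}$ of violation there, and no further rounding happens at that vertex.

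We then output the better of two solutions.
\begin{itemize}
\item[(S1)] The fixed edges together with, on each $C_i$, a maximum weight demand matching $A_i$ under capacities $b'+\beta d_{\max}$. On a cycle this reduces to \textsc{Maximum Weight $b$-Matching} under the no-bottleneck assumption, exactly as in \cref{sec:cycle}, so it is polynomial.
\item[(S2)] All cycle edges, with every fixed edge discarded.
\end{itemize}
Solution (S2) is feasible with violation $1\cdot d_{\max}\le\beta d_{\max}$: a cycle vertex $v$ carries $d(e_1)+d(e_2)\le b(v)+d_{\max}$, using $d(e_1)\le b(v)$ from the no-bottleneck assumption. At every vertex the violation of (S1) is at most $\beta d_{\max}$ by construction.

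The key per-cycle inequality I would aim to prove is
\[
x^\star(C)\cdot w \;\le\; w(A_C) + t(\beta)\, w(C), \qquad t(\beta):=\max\{0,(4-3\beta)/6\}.
\]
Given this, summing over the cycles yields
\[
\LPOPT \le F+\sum_i w(A_i) + t(\beta)\,W \le (1+t(\beta))\max\Bigl\{F+\sum_i w(A_i),\,W\Bigr\}.
\]
Since $1+t(\beta)=\alpha^{\mathrm{UB}}(\beta)$, this is exactly \eqref{eq:alpha-ub}; at $\beta=1$ it recovers the constant $7/6$ of \Cref{thm:better2}, so the present statement is a parametric generalization of that analysis.

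To prove the cycle inequality, I would follow the route of \cref{sec:cycle} and pass to the fractional matching polytope. On a cycle where both constraints at every vertex are tight, $x^\star$ is determined by the demands. Relaxing the capacity by $\beta d_{\max}$ with $\beta\ge1$ guarantees that each vertex can accept at least one incident edge, and both edges whenever $d(e_1)+d(e_2)\le b'(v)+\beta d_{\max}$. Hence the feasible edge sets of the relaxed cycle instance contain a $b$-matching polytope with $b_v\in\{1,2\}$. I would then write $x^\star$ as a convex combination involving the all-ones vector of $C$ and points of this polytope, whose extreme points on an odd cycle are integral or the all-$\tfrac12$ vector. The all-$\tfrac12$ vector contributes at most $\tfrac12 w(C)$, and it is comparable to $w(A_C)$ through the fact that a maximum matching of an odd cycle has $(|C|-1)/2$ edges.

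The main obstacle is quantifying the tradeoff so that the coefficient comes out exactly $(4-3\beta)/6$. This requires a case analysis on which vertices of the odd cycle have both incident demands fitting in $b'(v)+\beta d_{\max}$. The extremal configurations should be short odd cycles (triangles) with demands near $d_{\max}$ and $x^\star_e$ near $\tfrac12$, where the excess violation $(d(e_1)+d(e_2)-b'(v))/d_{\max}$ is close to $4/3$ at the critical vertices. I would first work out the triangle case explicitly to confirm the constant and the threshold $\beta=4/3$ beyond which every vertex accepts both of its edges, so that $A_C=C$ and $t=0$. Then I would extend it to longer odd cycles by an averaging argument over the $|C|$ near-maximum matchings obtained by deleting one vertex.
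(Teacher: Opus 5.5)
\paragraph{Verdict.} Your outer framework matches the paper's. You use the same iterative relaxation, with maximum weight $(\tilde b+\beta d_{\max})$-feasible matchings on the residual odd cycles, and the same capacity and bookkeeping arguments. You also make the same reduction to the per-cycle inequality $x^\star(C)\cdot w\le w(A_C)+t(\beta)\,w(C)$, where your $t(\beta)$ equals the paper's $\mu(\beta)$. However, that per-cycle inequality is the whole content of the theorem, and your proposal does not prove it. The route you sketch would not get there either.

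\paragraph{Why the sketched route fails.} Write $x^\star$ as a convex combination of the all-ones vector of $C$ (coefficient $\lambda\ge 0$) and points of the $\hat b$-matching polytope. The best this yields is $\lambda w(C)+(1-\lambda)\max\{w(A_C),w(C)/2\}$. That is never better than $\max\{w(A_C),w(C)/2\}$, which is only the $\beta=1$ estimate $t=1/6$. Nothing in your plan makes the bound improve as $\beta$ grows.

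\paragraph{A false claim.} You say that for $\beta>4/3$ every vertex accepts both of its edges, so $A_C=C$. This is false. The residual capacities $b'$ need not satisfy the no-bottleneck assumption: after edges are fixed, $b'(v)$ can be far below $d_{\max}$. A cycle vertex whose two incident demands equal $d_{\max}$ and with $b'(v)<(2-\beta)d_{\max}$ rejects its second edge for every $\beta<2$.

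\paragraph{The missing idea.} It is a one-line consequence of LP feasibility. Call $v$ bad if $d(e_1)+d(e_2)>b'(v)+\beta d_{\max}$. Since $x^\star_{e_1}d(e_1)+x^\star_{e_2}d(e_2)\le b'(v)$, you get $(1-x^\star_{e_1})d(e_1)+(1-x^\star_{e_2})d(e_2)>\beta d_{\max}$. Then $d(e_i)\le d_{\max}$ forces $x^\star(\delta(v))<2-\beta$. The argument then splits into two cases.

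If every vertex of $C$ is bad, then $x^\star/(2-\beta)$ is a fractional matching of the odd cycle. So the right decomposition is with the zero vector, not the all-ones vector. Half-integrality gives $x^\star\cdot w\le(2-\beta)\max\{w(A_C),w(C)/2\}$. Combined with $w(A_C)\ge w(C)/3$ (your averaging over vertex-deleted matchings does give this), the bound is at most $w(A_C)+\max\{0,(4-3\beta)/6\}\,w(C)$.

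If some vertex is not bad, the bad-vertex constraints form a fractional matching LP on a union of paths, which is integral. Since $2-\beta\le 1$, $x^\star$ is feasible for it, so $x^\star\cdot w\le w(A_C)$.

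This is exactly the paper's proof, and it needs no case analysis on triangles. In particular, $t=0$ for $\beta\ge 4/3$ holds because $x^\star$ is forced to be small, with $(2-\beta)/2\le 1/3\le w(A_C)/w(C)$, not because $A_C=C$.
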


We complement \Cref{thm:parametric-nba} with the following tightness result.

\begin{theorem} \label{thm:tight-nba}
    Let $\beta \geq 0$. For all $\varepsilon > 0$, there exists an instance $(G, b, d, w)$ of the \textsc{Demand Matching} problem satisfying the no-bottleneck assumption for which every $(b + \beta d_{\max}, d)$-feasible demand matching has weight at most $(1/\alpha^\mathrm{LB}(\beta) + \varepsilon)$ times the optimum of \eqref{eq:lp}, where
    $$ \alpha^\mathrm{LB}(\beta) := \left\{
        \begin{array}{ll}
            3(2 - \beta)/2 & \text{if $0 \leq \beta < 1$}, \\
            (10 - 3\beta)/6 & \text{if $1 \leq \beta \leq \frac{4}{3}$}, \\
            1 & \text{otherwise}.
        \end{array}
    \right. $$
    Consequently, for $\alpha < \alpha^\mathrm{LB}(\beta)$, there is no $(\alpha, \beta)$-bicriteria approximation algorithm with respect to \eqref{eq:lp} for the \textsc{Demand Matching} problem.
\end{theorem}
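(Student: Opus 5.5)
The plan is to prove \Cref{thm:tight-nba} by giving explicit integrality-gap instances for each of the three ranges of $\beta$. The range $\beta > 4/3$ (with $\alpha^\mathrm{LB} = 1$) is trivial. Every $(b,d)$-feasible set is LP-feasible, so any instance whose optimal LP solution is integral works; for example, a single edge.

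\textbf{Range $0 \le \beta < 1$.} I would use a triangle. Each edge has demand $1$ (so $d_{\max} = 1$) and weight $1$. Each vertex has capacity $b = 2 - \beta - \eta$ for a small $\eta > 0$.
\begin{itemize}
\item \emph{No-bottleneck assumption.} This holds because $b \ge 1$ for small $\eta$, using $\beta < 1$.
\item \emph{LP value.} Setting $x_e = b/2 < 1$ on every edge is feasible. This gives LP value at least $3(2-\beta-\eta)/2$.
\item \emph{Integral value.} The augmented capacity is $b + \beta = 2 - \eta < 2$. So no vertex can take both of its triangle edges, and any $(b+\beta d_{\max}, d)$-feasible set is a matching of the triangle, of weight at most $1$.
\end{itemize}
The ratio is therefore at least $3(2-\beta-\eta)/2$. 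Letting $\eta \to 0$ gives $\alpha^\mathrm{LB}(\beta) = 3(2-\beta)/2$ up to $\varepsilon$.

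\textbf{Range $1 \le \beta \le 4/3$.} This is the main obstacle. Plain triangles fail here: no-bottleneck forces $b \ge d_{\max}$, so the augmented capacity is at least $2d_{\max}$, and the whole odd cycle becomes feasible. The target value $(10-3\beta)/6$ equals $7/6$ at $\beta = 1$ and $1$ at $\beta = 4/3$, which suggests a gadget with two demand scales.

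The gadget I would try is a triangle whose edges have demand $1$ and weight $1$. Each triangle vertex $v$ also gets a pendant edge of demand $D = d_{\max} \ge 1$ and weight $w_p$, and the capacity of $v$ is about $D$ plus a parameter $c$. The parameters $D, c, w_p$ are tuned as functions of $\beta$ so that, after augmentation by $\beta D$, three conditions hold:
\begin{enumerate*}[label=(\roman*)]
\item a pendant plus both incident triangle edges is infeasible,
\item a pendant plus one triangle edge is feasible, and
\item all three triangle edges are feasible.
\end{enumerate*}
The integral optimum is then exactly the maximum of a few explicit patterns: all three triangle edges, all pendants plus one triangle edge, and mixtures. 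Meanwhile, the LP can split capacity fractionally between the pendants and the triangle.

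My first attempt, with capacity exactly $D$ and $D \approx 2/\beta$, only gives ratio $1/\beta$. That suggests the extra capacity parameter $c$ is needed. Alternatively, I would replace the single triangle by a longer odd cycle or by several triangles sharing pendant structure.

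Concretely, I would write the LP optimum and the integral optimum as piecewise-linear functions of $(D, c, w_p)$ and choose $w_p$ so that the competing integral patterns have equal weight. I would then maximize the ratio over $D$ and $c$, aiming for exactly $(10-3\beta)/6$. The piecewise formula is consistent with \Cref{thm:parametric-nba}, whose analysis on odd cycles should indicate which cycle structure is extremal. Mirroring the tight case of that analysis is my fallback for choosing the gadget.

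\textbf{Final step.} Once all three ranges are handled, the second sentence of \Cref{thm:tight-nba} follows directly. An $(\alpha,\beta)$-algorithm with respect to \eqref{eq:lp} with $\alpha < \alpha^\mathrm{LB}(\beta)$ would have to output a solution of weight at least $\LPOPT/\alpha$ on these instances, contradicting the bound $(1/\alpha^\mathrm{LB}(\beta)+\varepsilon)\LPOPT$ for small enough $\varepsilon$.
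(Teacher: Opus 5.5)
Your constructions for $0\le\beta<1$ (a unit-demand triangle with capacities just below $2-\beta$, choosing $\eta$ so that $b$ is rational) and for $\beta>4/3$ (a single edge) are correct and essentially match the paper's. So does the final deduction.

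\textbf{The gap.} The range $1\le\beta\le 4/3$ is the substantive part of the theorem, and there you never exhibit an instance. Moreover, the gadget you sketch provably cannot reach the target. Take your triangle with unit-demand, unit-weight edges, a pendant of demand $D\ge 1$ and weight $w_p$ at each triangle vertex, and capacity $D+c$ there. Your condition (i) forces $c<2-\beta D\le 2-\beta$.
\begin{itemize}
\item \emph{LP side.} Averaging an optimal solution over the rotations of the gadget, the LP optimum is attained at a symmetric point. So it is at most $3w_p+\tfrac32 c$ if $w_p\ge D/2$, and less than $3$ otherwise.
\item \emph{Integral side.} With augmented capacities, the integral optimum is at least $\max\{3,\,2w_p+2,\,3w_p+1\}$. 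These come from all triangle edges; two pendants plus the two triangle edges at the third vertex; and all pendants plus one triangle edge.
\end{itemize}
Hence the ratio of the LP optimum to the augmented integral optimum never exceeds $(6-\tfrac32\beta)/4$. This equals $9/8<7/6$ at $\beta=1$, and it is strictly below $(10-3\beta)/6$ for every $\beta<4/3$. Tuning $D$, $c$, $w_p$ cannot fix this. Because every triangle vertex has its own pendant, the integral solution can always take every pendant plus one triangle edge, which is too strong.

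\textbf{The missing idea.} Put the maximum demand on the triangle itself, and cover its three vertices with only \emph{two} small auxiliary edges, one of them parallel to a triangle edge. The paper's instance is as follows.
\begin{itemize}
\item Triangle edges $e_1=v_1v_2$, $e_2=v_2v_3$, $e_3=v_3v_1$ of demand $D=d_{\max}$ and weight $1$, with $b(v_i)=D$.
\item Weight-$1$ edges $f_1=v_1v_2$ and $f_2=v_3u$ of demand $\gamma D$, with $b(u)=\gamma D$ and $\gamma\in(\beta-1,1]$ rational and close to $\beta-1$.
\end{itemize}
The point $x=1$ on $f_1,f_2$ and $x=(1-\gamma)/2$ on the triangle is LP-feasible with value $(7-3\gamma)/2$.

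Integrally, $(1+\beta)D<(2+\gamma)D$. So a triangle vertex carrying both of its triangle edges cannot also carry its auxiliary edge. Every $v_i$ meets $f_1$ or $f_2$, so no $4$ of the $5$ edges form a feasible set, and the augmented optimum is at most $3$. The ratio is therefore $6/(7-3\gamma)\to 6/(10-3\beta)$.

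If you instead put a separate pendant of demand $\gamma D$ at each of $v_1,v_2,v_3$, the integral optimum becomes $4$ while the LP value is $(9-3\gamma)/2$. You then land exactly on the weaker bound $(6-\tfrac32\beta)/4$ above. The sharing of $f_1$ between two triangle vertices is what removes one unit from both sides and pushes the ratio up to the target.
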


\Cref{thm:parametric-nba,thm:tight-nba}, together with the $3$-approximation algorithm of \citet{parekh2011iterative}, provide matching upper and lower bounds on the best possible weight approximation ratio $\alpha(\beta)$ as a function of $\beta$, for $\beta = 0$ and all $\beta \geq 1$. Moreover, the bounds in \Cref{thm:parametric-nba,thm:tight-nba} reveal a \emph{phase transition} at $\beta = 1$. For $0 \leq \beta < 1$, the lower bound is witnessed by triangle instances as shown in the proof of \Cref{thm:tight-nba}; as $\beta$ approaches $1$ from below, this lower bound approaches $3/2$. In contrast, once each vertex is allowed an additive violation of the maximum demand, the approximation ratio abruptly drops to $7/6$. This discontinuity reflects a structural change in the rounding problem. For $\beta < 1$, even on a triangle, one cannot take two incident edges that exceed the original capacity by a full demand unit. At $\beta = 1$, the strategy (S2), which takes all edges of the residual odd cycles, becomes available and can be combined with strategy (S1) through a better-of-two argument.

\subsubsection{Greedy $(k, 1)$-Bicriteria Approximation for $k$-Hypergraph Demand Matching}

Our last algorithmic result is a simple, combinatorial $(2, 1)$-bicriteria approximation algorithm with respect to \eqref{eq:lp} for the \textsc{Demand Matching} problem. Although the guarantee is measured against \eqref{eq:lp}, the algorithm is combinatorial; \eqref{eq:lp} is used only in the analysis of the weight guarantee.

More generally, we give a $(k, 1)$-bicriteria approximation algorithm for the \textsc{$k$-Hypergraph Demand Matching} problem, which we define below. In the \textsc{$k$-Hypergraph Demand Matching} problem with $k \geq 2$, we are given a hypergraph $H = (V, \mathcal E)$ such that each hyperedge contains at most $k$ vertices, vertex capacities $b : V \to \QQ_{\geq 0}$, hyperedge demands $d : \mathcal E \to \QQ_{> 0}$, and hyperedge weights $w : \mathcal E \to \QQ_{\geq 0}$, and the objective is to find a subset $\mathcal M \subseteq \mathcal E$ such that $d(\mathcal M \cap \delta(v)) \leq b(v)$ for all $v \in V$, and that $w(\mathcal M)$ is maximized; here $\delta(v)$ refers to the set of hyperedges containing $v$. When $k = 2$, this is the \textsc{Demand Matching} problem. A natural LP relaxation of the \textsc{$k$-Hypergraph Demand Matching} problem is the following:
\begin{alignat}{4}
    \maximize \qquad && \sum_{e \in \mathcal E} x_e w(e) & \tag{$k$-HDM-LP} \label{eq:hdm-lp} \\
    \subjectto \qquad && \sum_{e \in \delta(v)} x_e d(e) & \leq b(v) && \qquad \forall v \in V, \notag \\
    && x_e &\in [0, 1] && \qquad \forall e \in \mathcal E. \notag
\end{alignat}
The notions of $\alpha$-approximation algorithms and $(\alpha, \beta)$-bicriteria approximation algorithms for the \textsc{$k$-Hypergraph Demand Matching} problem are defined analogously to those for the \textsc{Demand Matching} problem, where we redefine $d_{\max} := \max_{e \in \mathcal E} d(e)$. We use the following natural extension of the \emph{no-bottleneck assumption} for hypergraphs: for every $e \in \mathcal E$ and every $v \in e$, we have $d(e) \leq b(v)$. \citet{parekh2011iterative} and \citet{parekh2014generalized} gave two $2k$-approximation algorithms for the \textsc{$k$-Hypergraph Demand Matching} problem with respect to \eqref{eq:hdm-lp}, while the integrality gap is at least $2(k - 1 + 1/k)$.

\begin{theorem} \label{thm:main}
    For the \textsc{$k$-Hypergraph Demand Matching} problem, there is a combinatorial $(k, 1)$-bicriteria approximation algorithm with respect to \eqref{eq:hdm-lp}. In particular, there is a combinatorial $(2, 1)$-bicriteria approximation algorithm for the \textsc{Demand Matching} problem with respect to \eqref{eq:lp}.
\end{theorem}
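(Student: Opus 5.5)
The plan is a greedy algorithm that processes hyperedges in order of density and allows one additive overflow per vertex. The weight guarantee then comes from a charging argument against an arbitrary feasible solution $x$ of \eqref{eq:hdm-lp}. This argument uses only the LP constraint at each blocking vertex.

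\textbf{Algorithm.} Define the density $\rho(e) := w(e)/d(e)$ and sort the hyperedges so that $\rho(e_1) \geq \rho(e_2) \geq \dots$. Start with $\mathcal M = \emptyset$ and scan the hyperedges in this order. Add $e$ to $\mathcal M$ if every $v \in e$ currently has load $d(\mathcal M \cap \delta(v)) < b(v)$. Otherwise, pick some $v \in e$ with $d(\mathcal M \cap \delta(v)) \geq b(v)$ and call it the \emph{blocking vertex} of $e$. This is clearly combinatorial and runs in polynomial time.

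\textbf{Feasibility.} A hyperedge is added at $v$ only while the load at $v$ is strictly below $b(v)$. The addition raises the load by $d(e) \leq d_{\max}$, so the load stays below $b(v) + d_{\max}$. Once the load reaches $b(v)$, nothing further is added at $v$. Hence $\mathcal M$ is $(b + d_{\max}, d)$-feasible.

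\textbf{Weight.} Fix a feasible $x$ for \eqref{eq:hdm-lp}.
\begin{enumerate}
\item Consider a vertex $v$ that blocks at least one hyperedge, and let $B_v$ be the set of hyperedges it blocks. Every $f \in B_v$ is scanned after $v$ became saturated, while every $g \in \mathcal M \cap \delta(v)$ was added before that moment. Hence $\rho(f) \leq \rho_v := \min_{g \in \mathcal M \cap \delta(v)} \rho(g)$.
\item The LP constraint at $v$ together with saturation gives
$$\sum_{f \in B_v} x_f d(f) \leq b(v) - \sum_{g \in \mathcal M\cap\delta(v)} x_g d(g) \leq \sum_{g \in \mathcal M \cap \delta(v)} (1-x_g)\, d(g).$$
\item Combining the two previous steps,
$$\sum_{f\in B_v} x_f w(f) \leq \rho_v \sum_{g\in\mathcal M\cap\delta(v)}(1-x_g)\,d(g) \leq \sum_{g\in \mathcal M\cap\delta(v)} (1-x_g)\, w(g).$$
\item Every hyperedge outside $\mathcal M$ lies in exactly one $B_v$. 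Summing the bound above over blocking vertices, each $g \in \mathcal M$ is charged at most $k$ times, since $|g| \leq k$. Therefore
$$\sum_e x_e w(e) \leq \sum_{g \in \mathcal M} \bigl(x_g + k(1-x_g)\bigr) w(g) \leq k\, w(\mathcal M).$$
\end{enumerate}
Applying this to an optimal LP solution gives the $(k,1)$ guarantee. The case $k=2$ gives the statement for the \textsc{Demand Matching} problem.

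\textbf{Main obstacle.} A naive charge bounds $x_g w(g)$ by $w(g)$ for accepted hyperedges and then charges $w(\mathcal M \cap \delta(v))$ again at each saturated vertex. That only yields $k+1$. The key step is item 2: subtract the LP mass already placed on the accepted hyperedges at $v$, so that each accepted $g$ is charged only $(1-x_g)w(g)$ per blocking vertex it contains. This requires processing by density rather than by weight, so that demand can be converted into weight at the single rate $\rho_v$.
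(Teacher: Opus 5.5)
Your proposal is correct and is essentially the paper's proof. It uses the same density-ordered greedy allowing one overflow per vertex, the same LP constraint at a blocking vertex to bound the rejected LP demand by $\sum_{g}(1-x_g)\,d(g)$ over accepted hyperedges there, and the same $k$-fold charging. The differences are cosmetic: you accept when the load is strictly below $b(v)$ rather than at most $b(v)$, and you convert demand to weight at rate $\min_{g\in\mathcal M\cap\delta(v)}\rho(g)$ instead of the density $\rho(e_v)$ of the first hyperedge charged to $v$.

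One case your notation glosses over is a saturated $v$ with $\mathcal M\cap\delta(v)=\emptyset$, where $\rho_v$ is a minimum over the empty set. This can only happen when $b(v)=0$, and it is harmless, since step 2 then forces $x_f=0$ for every $f\in B_v$.
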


The algorithm in \Cref{thm:main} uses a simple greedy strategy. Informally, the algorithm scans the hyperedges in a nonincreasing order of \emph{density}, namely $w(e)/d(e)$ for a hyperedge $e \in \mathcal E$, and adds a hyperedge whenever every vertex it contains has load at most its original capacity. This greedy framework is combinatorial and arguably quite simple, and its running time is dominated by the time required to sort the hyperedges.

This greedy algorithm is reminiscent of the greedy algorithm of \citet{dantzig1957discrete} for exactly solving the \textsc{Fractional Knapsack} problem, which is defined as follows. The input of the \textsc{Fractional Knapsack} problem consists of a \emph{capacity} $B \in \QQ_{\geq 0}$ and $n$ elements with \emph{demands} $d : [n] \to \QQ_{> 0}$ and \emph{weights} $w : [n] \to \QQ_{\geq 0}$. The objective is to choose a fraction $x_i \in [0, 1]$ of each element so that $\sum_{i = 1}^n x_i d(i) \leq B$ and $\sum_{i = 1}^n x_i w(i)$ is maximized. The greedy algorithm of \citet{dantzig1957discrete} sorts all elements in a nonincreasing order of \emph{density}, and while the total demand of chosen fractional elements is less than $B$, takes the largest possible fraction of each element in this order. It is not hard to observe that this greedy algorithm takes all but at most one selected element \emph{fully}. Therefore, one can obtain a $(1, 1)$-bicriteria approximation for the \textsc{Knapsack} problem by rounding the single fractional element, if any, up to a full element in the solution obtained by the greedy algorithm of \citet{dantzig1957discrete}.

We remark that our greedy algorithm is a variant of the greedy algorithm of \citet*{chekuri2009unsplittable} which yields a $k$-approximation guarantee for the unit-weight case of the \textsc{$k$-Hypergraph Demand Matching} problem, where all hyperedges have weight $1$. More generally, their algorithm gives a $k$-approximation guarantee for the unit-weight case of the \textsc{$k$-Sparse Column-Restricted Packing Integer Programming} problem.

\subsection{Further Relevant Work}

We briefly survey results on approximation algorithms for combinatorial optimization problems relevant to the \textsc{Demand Matching} problem. This subsection is not directly related to the main results of this paper and may therefore be skipped.

We begin with the \textsc{$k$-Column-Sparse Packing Integer Program} (\textsc{$k$-CS-PIP}) problem, which generalizes the \textsc{$k$-Hypergraph Demand Matching} problem. In the \textsc{$k$-CS-PIP} problem, every hyperedge $e \in \mathcal E$ is allowed to have different demand values $d^v(e)$ for all $v \in e$, and the output $\mathcal M \subseteq \mathcal E$ is required to satisfy $d^v(\mathcal M \cap \delta(v)) := \sum_{e \in \mathcal M \cap \delta(v)} d^v(e) \leq b(v)$ for all $v \in V$. The iterative packing approach of \citet{parekh2011iterative}, more generally, yields a $2k$-approximation (resp., $3$-approximation) for the special case of \textsc{$k$-CS-PIP} (resp., $2$-CS-PIP) where the hyperedges can be ordered, namely, $e_1, \ldots, e_m$, so that the demands are nondecreasing at each vertex (i.e., for all $i, j \in [m]$ with $i \leq j$ and $v \in e_i \cap e_j$, we have $d^v(e_i) \leq d^v(e_j)$).

\citet{ahmadian2017further} considered a generalization of \textsc{$2$-CS-PIP}, where an additional matroid constraint is present (i.e., a matroid on the edge set is given and the output is required to be independent in the matroid). They gave the aforementioned $(1, 2)$-bicriteria approximation algorithm for this generalized problem, which is used to obtain a $25/3$-approximation algorithm and improved approximation ratios for several special cases.

The \textsc{Demand Matching} problem is a special case of the \textsc{Multicommodity Demand Flow on a Tree} problem (also known as \textsc{All-or-Nothing Flow on a Tree} or \textsc{Unsplittable Flow on a Tree}), which we define as follows. The input consists of a tree $T = (V, E)$, edge capacities $b : E \to \QQ_{\geq 0}$, and a finite set $D$ of requests with demands $d : D \to \QQ_{> 0}$ and weights $w : D \to \QQ_{\geq 0}$. Each request $i \in D$ wishes to route $d(i)$ units of flow from $s_i \in V$ to $t_i \in V$ along the unique $s_i$--$t_i$ path on $T$. The objective is to find a maximum weight subset of requests satisfying the edge capacities. The \textsc{Demand Matching} problem can be interpreted as the special case of this problem where $T$ is a star.

The \textsc{Multicommodity Demand Flow on a Tree} problem has been extensively studied. Under the assumption that the maximum demand is at most the minimum capacity, \citet*{chekuri2007multicommodity} proved that the integrality gap of the natural LP relaxation is at most $4$ if all demands are equal to one and at most $46.168$ if demands are arbitrary. (Without this assumption, the integrality gap is superconstant for arbitrary demands.) For unit demands, \citet*{konemann2014multicommodity} gave a $(1, 2)$-bicriteria approximation algorithm using the iterative relaxation framework, which implies a $\min\{ 3, 1 + O(1/b_{\min}) \}$-approximation algorithm for this problem and several variants, where $b_{\min}$ denotes the minimum capacity.

The special case of the \textsc{Multicommodity Demand Flow on a Tree} problem where $T$ is a path, as well as its generalization in which $T$ is replaced by a general graph, has also attracted much attention. For conciseness, we omit a discussion of these variants.

A further generalization of the \textsc{Multicommodity Demand Flow on a Tree} problem is called the \textsc{Column-Restricted Packing Integer Program} (\textsc{CR-PIP}) problem, introduced by \citet{kolliopoulos2001approximation}. A \emph{column-restricted packing integer program} is an integer program of the form
\begin{equation} \label{eq:cr-pip}
    \max\left\{ w^{\T} x : A \diag(d) x \leq b, x \in \{ 0, 1 \}^n \right\},
\end{equation}
where $A$ is an $m \times n$ $\{ 0, 1 \}$-valued matrix, $b \in \ZZ_+^m$, $d, w \in \ZZ_+^n$, and $\diag(d)$ denotes the $n \times n$ diagonal matrix whose diagonal entries are the components of $d$. It is interesting to ask how the integrality gap of the natural LP relaxation of \eqref{eq:cr-pip} is related to that of its unit-demand counterpart, i.e.,
\begin{equation} \label{eq:cr-pip-unit}
    \max\left\{ w^{\T} x : Ax \leq b, x \in \{ 0, 1 \}^n \right\}.
\end{equation}
Building upon the grouping and scaling technique of \citet{kolliopoulos2001approximation}, \citet*{chekuri2007multicommodity} proved that, if the maximum demand is at most the minimum capacity, then the integrality gap of the natural LP relaxation of \eqref{eq:cr-pip} is at most $11.542$ times that of \eqref{eq:cr-pip-unit}.

\subsection{Organization of This Paper}

This paper is organized as follows. In \cref{sec:cycle}, we prove properties of the \textsc{Demand Matching} problem on a cycle with the no-bottleneck assumption, which are used in subsequent sections. In \cref{sec:better2}, we present and analyze the iterative relaxation algorithm for \Cref{thm:better2}. In \cref{sec:parametric}, we generalize \Cref{thm:better2} to a family of parametric bicriteria approximation algorithms, proving \Cref{thm:parametric-nba}. In \cref{sec:greedy}, we present and analyze the greedy algorithm for \Cref{thm:main}. In \cref{sec:tight}, we prove the tightness result in \Cref{thm:tight-nba}.
\section{Demand Matching on a Cycle} \label{sec:cycle}

In this section, we study the special case of the \textsc{Demand Matching} problem where the underlying graph is a cycle and the no-bottleneck assumption is satisfied. First, we observe that this special case reduces to a \textsc{Maximum Weight $b$-Matching} problem on the cycle.

\begin{lemma} \label{lem:dm-cycle}
    Let $(C = (V, E), b, d, w)$ be an instance of the \textsc{Demand Matching} problem such that $C$ is a cycle and that the no-bottleneck assumption is satisfied. Let $\hat b : V \to \RR \cup \{ \infty \}$ be defined by
    \begin{equation} \label{eq:hat-b}
        \hat b(v) := \left\{
        \begin{array}{ll}
            1 & \text{if $d(\delta(v)) > b(v)$}, \\
            \infty & \text{otherwise}.
        \end{array}
    \right.
    \end{equation}
    Then a subset $M \subseteq E$ is a $(b, d)$-feasible demand matching if and only if $M$ is a $\hat b$-matching.
\end{lemma}

\begin{proof}
    Every vertex $v$ of a cycle has exactly two incident edges. By the no-bottleneck assumption, either incident edge is feasible at $v$, so the capacity constraint at $v$ is not satisfied exactly when both incident edges are contained and $d(\delta(v)) > b(v)$.
\end{proof}

\begin{lemma} \label{lem:dm-cycle-polytime}
    There is a polynomial-time algorithm for the special case of the \textsc{Demand Matching} problem in which the input graph is a cycle and the no-bottleneck assumption is satisfied.
\end{lemma}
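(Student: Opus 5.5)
By \Cref{lem:dm-cycle}, an instance on a cycle satisfying the no-bottleneck assumption can be replaced by a \textsc{Maximum Weight $\hat b$-Matching} instance on the same cycle, where $\hat b$ is defined in \eqref{eq:hat-b}. The two instances have exactly the same feasible sets and the same weights. The plan is therefore to compute $\hat b$ and then solve the resulting $\hat b$-matching problem exactly. Computing $\hat b$ takes linear time, since it only requires comparing $d(\delta(v))$ with $b(v)$ at each vertex. The value $\infty$ at a vertex can be replaced by $2$, because a vertex of a cycle has only two incident edges. After this change, $\hat b$ is a finite integral capacity function with values in $\{1,2\}$.

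There are two ways to solve the $\hat b$-matching problem, and either suffices.

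The first route is to invoke a known polynomial-time algorithm for \textsc{Maximum Weight $b$-Matching} \cite{marsh1979matching,padberg1982odd,gabow1983efficient,anstee1987polynomial,gerards1995matching} on the cycle with capacities $\hat b$. Since $\hat b$ is bounded by $2$, its running time is polynomial in the input size. A multigraph issue can only arise when the cycle has length $2$, that is, two vertices joined by two parallel edges. This case is also trivial to handle directly by enumerating the at most four subsets of edges.

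The second route, which I would present because it is elementary, is a dynamic program. Write the cycle as $v_1 e_1 v_2 e_2 \cdots v_n e_n v_1$, so that edge $e_i$ joins $v_i$ and $v_{i+1}$, with indices taken modulo $n$. The only constraint is that two consecutive edges $e_{i-1}, e_i$ may not both be chosen when $\hat b(v_i)=1$. Fix whether $e_n$ is chosen; this gives two cases. In each case, scan $e_1,\dots,e_{n-1}$ along the path. Keep a state recording whether the previous edge was chosen, and forbid choosing $e_i$ if $e_{i-1}$ was chosen and $\hat b(v_i)=1$. At the end of the scan, check the wrap-around constraints at $v_1$ (between $e_n$ and $e_1$) and at $v_n$ (between $e_{n-1}$ and $e_n$). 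This runs in $O(n)$ time, and we output the best feasible solution found over the two cases.

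Correctness follows directly from \Cref{lem:dm-cycle}: an optimal $\hat b$-matching is an optimal $(b,d)$-feasible demand matching. There is no real obstacle in this argument. The only points needing care are the degenerate two-vertex cycle with parallel edges and the closing of the cycle in the dynamic program, and both are handled by fixing the status of one edge as described.
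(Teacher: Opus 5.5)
Your proposal is correct. Your first route is exactly the paper's proof. The paper invokes \Cref{lem:dm-cycle} to turn the instance into a maximum weight $\hat b$-matching problem on the cycle, then cites the known polynomial-time $b$-matching algorithms. It does not remark, as you do, that $\infty$ can be capped at $2$ so the capacities are trivially small.

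Your second route, the dynamic program, is a more elementary alternative. It avoids the general $b$-matching machinery, runs with $O(n)$ arithmetic operations, and makes the lemma self-contained. The cost is a little case handling to close the cycle. The paper's citation is shorter, but it leans on heavy external results for what is really a cycle with conflicts between consecutive edges.

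One point in the DP needs tightening. A one-bit state (``was the previous edge chosen'') does not remember the status of $e_1$. So the constraint at $v_1$ cannot be ``checked at the end'' of a maximizing scan: if the best path violates it, you would need the best path that does not. The fix is to initialize the state before $e_1$ to the fixed status of $e_n$, treating $e_n$ as $e_0$. This enforces the constraint at $v_1$ at the first step. Only the constraint at $v_n$ then needs to be imposed when choosing among the final states. Alternatively, carry the status of $e_1$ in the state. Either fix also handles the two-vertex cycle $n=2$ uniformly, so no separate enumeration is needed.
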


\begin{proof}
    By \Cref{lem:dm-cycle}, the problem reduces to finding a maximum weight $\hat b$-matching, where $\hat b$ is defined in \eqref{eq:hat-b}; this can be done in polynomial time \cite{marsh1979matching,padberg1982odd,gabow1983efficient,anstee1987polynomial,gerards1995matching}.
\end{proof}

The following lemma relates three quantities defined on a cycle: the optimum of \eqref{eq:lp}, the total edge weight, and the maximum weight of a demand matching with \emph{relaxed} vertex capacities. This lemma is the crux for the weight guarantee of \Cref{alg:better2}. In particular, we relate the optimum of \eqref{eq:lp} to that of the fractional matching LP, and exploit the half-integrality of extreme points of the fractional matching polytope.

\begin{lemma} \label{lem:dm-cycle-lp}
    Let $C = (V, E)$ be a cycle. Let $b : V \to \QQ_{\geq 0}$. Let $d : E \to \QQ_{> 0}$ and $w : E \to \QQ_{\geq 0}$. Let $\Delta \geq d_{\max} := \max_{e \in E} d(e)$. Let $x^\star \in \RR^E$ be an extreme point optimal solution to \eqref{eq:lp} on the instance $(C, b, d, w)$. Let $M^\star \subseteq E$ be a maximum weight (with respect to $w$) $(b + \Delta, d)$-feasible demand matching in $C$. Then
    $$ \sum_{e \in E} x_e^\star w(e) \leq w(M^\star) + \frac{w(E)}{6}. $$
\end{lemma}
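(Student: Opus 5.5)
The plan is to compare \eqref{eq:lp} with a fractional $\hat b$-matching LP on $C$, and then use half-integrality of that LP. Apply \Cref{lem:dm-cycle} to the relaxed instance $(C, b+\Delta, d, w)$. It satisfies the no-bottleneck assumption because $d(e) \leq d_{\max} \leq \Delta \leq b(u)+\Delta$. Hence $M^\star$ is a maximum weight $\hat b$-matching, where $\hat b(v)=1$ exactly for $v$ in the set
$$T := \{v \in V : d(\delta(v)) > b(v)+\Delta\}$$
of \emph{tight} vertices, and $\hat b(v)=\infty$ otherwise.

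\textbf{Step 1: $x^\star$ satisfies degree constraints at tight vertices.} Fix $v \in T$ with incident edges $e,f$. Suppose $x^\star_e + x^\star_f \geq 1$. Using $d(e),d(f)\leq \Delta$ we get
$$x^\star_e d(e) + x^\star_f d(f) - \bigl(d(e)+d(f)-\Delta\bigr) = \Delta - (1-x^\star_e)d(e) - (1-x^\star_f)d(f) \geq \Delta - (2 - x^\star_e - x^\star_f)\Delta \geq 0.$$
The LP constraint at $v$ then gives $b(v) \geq d(e)+d(f)-\Delta$, which contradicts $v \in T$. So $x^\star_e+x^\star_f<1$. Therefore $x^\star$ lies in the polytope
$$P := \{y \in [0,1]^E : y(\delta(v)) \leq 1 \ \forall v \in T\},$$
and $\sum_e x^\star_e w(e) \leq \max\{w^\T y : y \in P\}$.

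\textbf{Step 2: half-integrality of $P$.} I will show that every extreme point of $P$ is half-integral, and that its fractional edges form odd cycles whose vertices are all in $T$ with tight degree constraints. This is the standard argument for the fractional matching polytope with box constraints, restricted here to a subset of degree constraints. The simplest route is to show it directly on the cycle. Consider a maximal path of strictly fractional edges. If it ends at an endpoint that is not in $T$, or at an endpoint in $T$ whose constraint is not tight, then alternately perturbing by $\pm\varepsilon$ along the path preserves feasibility. The same perturbation works for an even cycle. Either case contradicts extremality. So the only possible fractional support is all of $C$, with $C$ odd, $V=T$, and every degree constraint tight, which forces $y\equiv 1/2$.

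\textbf{Step 3: the two cases.} Take an optimal extreme point $y$ of $P$. If $y$ is integral, it is a $\hat b$-matching, so $w^\T y \leq w(M^\star)$. Otherwise $C$ is an odd cycle of length $n\geq 3$, every vertex is tight, and $w^\T y = w(E)/2$. In this case $\hat b\equiv 1$, so $M^\star$ is a maximum weight matching of $C$. Averaging over the $n$ rotations of a matching with $(n-1)/2$ edges, each edge is covered equally often, giving
$$w(M^\star) \geq \frac{n-1}{2n}\,w(E) \geq \frac{w(E)}{3}.$$
Hence $w(E)/2 \leq w(M^\star) + w(E)/6$, which proves the claim in both cases.

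I expect Step 1 to be the main obstacle: translating the demand-weighted constraint into the unit-degree constraint $x_e+x_f \leq 1$ at tight vertices. The perturbation argument in Step 2 is routine, but I will check the boundary case where a path endpoint lies in $T$ with a non-tight constraint. Extremality of $x^\star$ is not needed in this argument.
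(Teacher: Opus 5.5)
Your proposal is correct and follows essentially the paper's route. It uses the same derivation of $x^\star(\delta(v))<1$ at vertices with $\hat b(v)=1$, the same relaxation to a fractional matching LP with degree constraints only at those vertices, the same integral versus all-$1/2$ dichotomy, and the same bound $w(M^\star)\geq w(E)/3$. The differences are that the paper cites Balinski's half-integrality theorem (and integrality on paths) where you give a direct perturbation argument, and partitions $C$ into three matchings where you average over rotations.

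In Step 2, state explicitly why a maximal fractional path other than all of $C$ can never end at a tight vertex of $T$. This is the boundary case that actually needs checking, rather than the non-tight one you flagged. The path's endpoint has one other incident edge, which is integral. If that edge has value $1$, the path edge is forced to $0$; if it has value $0$, the constraint at the endpoint is slack.
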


\begin{proof}
    Note that an instance with vertex capacities $b + \Delta$ and edge demands $d$ always satisfies the no-bottleneck assumption since $d(e) \leq d_{\max} \leq b(v) + d_{\max} \leq b(v) + \Delta$ for all $v \in V$ and all $e \in \delta(v)$. By \Cref{lem:dm-cycle}, a subset $M \subseteq E$ is a $(b + \Delta, d)$-feasible demand matching if and only if $M$ is a $\hat b$-matching, where $\hat b : V \to \QQ_{\geq 0} \cup \{ \infty \}$ is defined by
    $$ \hat b(v) := \left\{
        \begin{array}{ll}
            1 & \text{if $d(\delta(v)) > b(v) + \Delta$}, \\
            \infty & \text{otherwise}.
        \end{array}
    \right. $$
    
    Let $V_1 := \{ v \in V : \hat b(v) = 1 \}$. Let $v \in V_1$. Then $|\delta(v)| = 2$ and $d(\delta(v)) > b(v) + \Delta$. Let $e_1$ and $e_2$ be the two edges incident to $v$. Since $x^\star$ is feasible in \eqref{eq:lp} on the instance $(C, b, d, w)$,
    $$ d(\delta(v)) > b(v) + \Delta \geq \sum_{e \in \delta(v)} x_e^\star d(e) + \Delta. $$
    Rearranging this inequality yields that
    $$ \Delta \sum_{e \in \delta(v)} (1 - x_e^\star) \geq \sum_{e \in \delta(v)} (1 - x_e^\star) d(e) = d(\delta(v)) - \sum_{e \in \delta(v)} x_e^\star d(e) > \Delta. $$
    Hence,
    $$ 2 - x_{e_1}^\star - x_{e_2}^\star = (1 - x_{e_1}^\star) + (1 - x_{e_2}^\star) = \sum_{e \in \delta(v)} (1 - x_e^\star) > 1. $$
    It follows that for all $v \in V_1$,
    $$ x^\star(\delta(v)) = x_{e_1}^\star + x_{e_2}^\star < 1. $$
    
    Let $y^\star \in \RR^E$ be an extreme point optimal solution to the following LP:
    \begin{alignat}{4}
        \maximize \qquad && \sum_{e \in E} y_e w(e) \tag{FM-LP} \label{eq:fm-lp} \\
        \subjectto \qquad && \sum_{e \in \delta(v)} y_e & \leq 1 && \qquad \forall v \in V_1, \notag \\
        && y_e &\in [0, 1] && \qquad \forall e \in E. \notag
    \end{alignat}
    Since $x^\star$ is feasible in \eqref{eq:fm-lp}, we have
    $$ \sum_{e \in E} x_e^\star w(e) \leq \sum_{e \in E} y_e^\star w(e). $$
    
    We claim that
    \begin{equation} \label{eq:fm-ext-pt}
        \sum_{e \in E} y_e^\star w(e) \leq \max \left\{ w(M^\star), \frac{w(E)}{2} \right\}.
    \end{equation}
    We have two cases. First, suppose that $V_1 = V$. Then \eqref{eq:fm-lp} is the fractional matching LP on $C$. It is known that $y^\star$ is half-integral \cite{balinski1965integer}. Hence, either $y^\star$ is integral, or $y_e^\star = 1/2$ for all $e \in E$. If $y^\star$ is integral, then $\supp(y^\star)$ is a matching (and thus a $\hat b$-matching) in $C$. By \Cref{lem:dm-cycle}, $\supp(y^\star)$ is a $(b + \Delta, d)$-feasible demand matching in $C$. Since $M^\star$ is a maximum weight $(b + \Delta, d)$-feasible demand matching in $C$, we have $\sum_{e \in E} y_e^\star w(e) = w(\supp(y^\star)) \leq w(M^\star)$. If $y_e^\star = 1/2$ for all $e \in E$, we have $\sum_{e \in E} y_e^\star w(e) = w(E)/2$. Hence, we have \eqref{eq:fm-ext-pt}. Second, suppose that $V_1 \subsetneq V$. Then \eqref{eq:fm-lp} is the maximum weight fractional matching LP on a vertex-disjoint union of paths. It is known that $y^\star$ is integral \cite{balinski1965integer}. By the same argument as in the first case,
    $$ \sum_{e \in E} y_e^\star w(e) \leq w(M^\star). $$
    Combining the two cases proves \eqref{eq:fm-ext-pt}.
    
    It is not hard to see that $C$ can be partitioned into three matchings; indeed, alternating edges form two matchings, with the remaining edge forming a third when $C$ is odd. Hence, at least one of these three matchings has weight at least $w(E)/3$. Since $M^\star$ is a maximum weight $(b + \Delta, d)$-feasible demand matching on $C$, it is a maximum weight $\hat b$-matching on $C$ by \Cref{lem:dm-cycle}. Since $\hat b(v) \geq 1$ for all $v \in V$, $M^\star$ has weight at least that of any matching on $C$. It follows that
    $$ w(M^\star) \geq \frac{w(E)}{3}. $$
    We claim that $\sum_{e \in E} x_e^\star w(e) \leq w(M^\star) + w(E)/6$. If $\sum_{e \in E} x_e^\star w(e) \leq w(M^\star)$, then we are done. Otherwise,
    $$ \sum_{e \in E} x_e^\star w(e) \leq \frac{w(E)}{2} = \frac{w(E)}{3} + \frac{w(E)}{6} \leq w(M^\star) + \frac{w(E)}{6}. $$
    This completes the proof.
\end{proof}
\section{Iterative Relaxation Better-of-Two Algorithm} \label{sec:better2}

In this section, we present the algorithm for \Cref{thm:better2}. As discussed earlier, our algorithm for \Cref{thm:better2} uses iterative relaxation techniques. Informally, our algorithm repeatedly solves the natural LP relaxation of a \emph{residual} instance and obtains an extreme point optimal solution. In each iteration, one of the following cases must occur: the solution has a newly integral edge, which we freeze at its current value; there is a vertex of degree at most one in the residual instance, in which case we drop the corresponding LP constraint; or, as we show in \Cref{lem:extreme-point}, the support graph of the solution is a vertex-disjoint union of odd cycles. In the last case, we take the better of the following two rounding strategies:
\begin{itemize}
    \item[(S1)] Take a maximum weight demand matching on each odd cycle, with each vertex capacity being the residual capacity \emph{relaxed by $d_{\max}$}.
    \item[(S2)] Take all edges of the odd cycles, and discard all previously fixed edges.
\end{itemize}
As we argue in \Cref{lem:dm-cycle-polytime}, a maximum weight demand matching on a cycle can be computed in polynomial time.

Now, we define the LP used in the algorithm. For all $W \subseteq V$, $F \subseteq E$, and $\tilde b : V \to \QQ$ that is nonnegative on $W$, we use $\LP(W, F, \tilde b)$ to denote the following LP:
\begin{alignat*}{4}
    \maximize \qquad && \sum_{e \in F} x_e w(e) \\
    \subjectto \qquad && \sum_{e \in F \cap \delta(v)} x_e d(e) & \leq \tilde b(v) && \qquad \forall v \in W \\
    && x_e &\in [0, 1] && \qquad \forall e \in F.
\end{alignat*}
We show, by a simple counting argument, the following structural lemma for a strictly fractional extreme point of $\LP(W, F, \tilde b)$. We note that this structure is also stated in \cite[Lemma 3.1]{shepherd2007demand} with an equivalent yet slightly different formulation; we state and prove it here for the sake of completeness.

\begin{lemma}[\citet{shepherd2007demand}] \label{lem:extreme-point}
    Let $W \subseteq V$ and $F \subseteq E$. Let $\tilde b : V \to \QQ$ be nonnegative on $W$. Let $x^\star \in \RR^F$ be an extreme point of $\LP(W, F, \tilde b)$ such that $0 < x_e^\star < 1$ for all $e \in F$. Suppose that $|F \cap \delta(v)| \geq 2$ for all $v \in W$. Then $(W, F)$ is a vertex-disjoint union of odd cycles, and $V \setminus W$ is a set of isolated vertices.
\end{lemma}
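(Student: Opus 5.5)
The plan is a counting argument on the tight constraints at $x^\star$, followed by a linear-independence check on each component.

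\textbf{Counting.} Since $0 < x_e^\star < 1$ for every $e \in F$, none of the box constraints is tight at $x^\star$. Because $x^\star$ is an extreme point of $\LP(W, F, \tilde b)$, it is the unique solution of its tight constraints. Hence there is a set $T \subseteq W$ of tight vertex constraints whose rows (the vectors $(d(e)\mathds{1}[e \in \delta(v)])_{e \in F}$ for $v \in T$) are linearly independent and $|T| = |F|$. In particular $|W| \geq |T| = |F|$. Using the degree hypothesis $|F \cap \delta(v)| \geq 2$ for $v \in W$, we get
$$ 2|F| = \sum_{v \in V} |F \cap \delta(v)| \geq \sum_{v \in W} |F \cap \delta(v)| \geq 2|W| \geq 2|F|. $$
All inequalities are therefore equalities. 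This gives four facts: every $v \in W$ has exactly two incident edges of $F$; every $v \in V \setminus W$ is incident to no edge of $F$, so it is isolated; $|W| = |F|$; and $T = W$, so all constraints at $W$ are tight and linearly independent. It follows that $(W, F)$ is $2$-regular, i.e., a vertex-disjoint union of cycles. Since $G$ has no self-loops, every cycle has length at least $2$; a pair of parallel edges counts as a $2$-cycle.

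\textbf{Parity.} It remains to rule out even cycles, and this is the only step that needs more than counting. The constraint matrix $A$ restricted to rows $W$ and columns $F$ is block diagonal across the cycles, because components share neither vertices nor edges. Since its $|W| = |F|$ rows are independent, $A$ is square and nonsingular, so each diagonal block must be nonsingular. For a cycle $C$, the block equals $N_C \cdot \diag(d|_C)$, where $N_C$ is the vertex–edge incidence matrix of $C$. This uses the fact that each edge has the same demand $d(e)$ at both of its endpoints. If $C$ is even, $C$ is bipartite with sides $A_1, A_2$. The vector with entry $+1$ on $A_1$ and $-1$ on $A_2$ lies in the left kernel of $N_C$, since each edge has exactly one endpoint on each side. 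So $N_C$ is singular, and scaling its columns by the positive demands keeps it singular, contradicting nonsingularity. The parallel-edge $2$-cycle is covered by the same argument. Hence every cycle is odd, which proves the lemma.
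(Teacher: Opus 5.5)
Your proof is correct and takes essentially the same route as the paper. Both use the rank bound $|W| \geq |F|$ together with degree counting to force $2$-regularity on $W$ and isolation outside $W$, and both exclude even cycles because the column-scaled incidence rows of an even cycle are linearly dependent via the $\pm 1$ bipartition vector; you simply make explicit some details the paper leaves implicit, namely $T = W$, the block-diagonal structure, and the parallel-edge $2$-cycle case.
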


\begin{proof}
    Since $0 < x_e^\star < 1$ for all $e \in F$, a rank argument (see, e.g., Theorem 5.7 of \cite{schrijver2003combinatorial}) implies that $|W| \geq |F|$. Then
    $$ 2|W| \leq \sum_{v \in W} |F \cap \delta(v)| \leq \sum_{v \in V} |F \cap \delta(v)| = 2|F| \leq 2|W|. $$
    Hence, we have $|F \cap \delta(v)| = 2$ for all $v \in W$, $|F \cap \delta(v)| = 0$ for all $v \in V \setminus W$, and $|F| = |W|$. It follows that $(W, F)$ is a vertex-disjoint union of cycles, and $V \setminus W$ is a set of isolated vertices.

    If we consider the full-rank constraint matrix of $\LP(W, F, \tilde b)$ whose columns are indexed by $F$ and rows indexed by $W$, and scale each column $e \in F$ by $1/d(e)$, we see that any even cycle $C = (V_C, E_C)$ would make this matrix singular since the rows in $V_C$ are not linearly independent. This proves that $(W, F)$ is a vertex-disjoint union of odd cycles.
\end{proof}

Now, we formally state the algorithm in \Cref{alg:better2}.

\begin{algorithm}[ht]
    \caption{A better-of-two algorithm for the \textsc{Demand Matching} problem.}
    \label{alg:better2}
    \KwIn{a graph $G = (V, E)$, $b : V \to \QQ_{\geq 0}$, $d : E \to \QQ_{> 0}$, and $w : E \to \QQ_{\geq 0}$, such that the no-bottleneck assumption is satisfied.}
    \KwOut{$M \subseteq E$.}
    $M_1 \leftarrow \emptyset$, $M_2 \leftarrow \emptyset$, $W \leftarrow V$, $F \leftarrow E$, $\tilde b \leftarrow b$, $d_{\max} \leftarrow \max_{e \in E} d(e)$. \\
    \While{$F \neq \emptyset$}{
        Let $x^\star \in \RR^F$ be an extreme point optimal solution to $\LP(W, F, \tilde b)$. \\
        \uIf{$x_e^\star = 0$ for some $e \in F$}{
            $F \leftarrow F \setminus \{ e \}$. \CommentSty{\textcolor{blue}{// Case 1}}
        }
        \uElseIf{$x_e^\star = 1$ for some $e = uv \in F$}{
            $M_1 \leftarrow M_1 \cup \{ e \}$, $F \leftarrow F \setminus \{ e \}$, $\tilde b(u) \leftarrow \tilde b(u) - d(e)$, $\tilde b(v) \leftarrow \tilde b(v) - d(e)$. \CommentSty{\textcolor{blue}{// Case 2}}
        }
        \uElseIf{$|F \cap \delta(v)| \leq 1$ for some $v \in W$}{
            $W \leftarrow W \setminus \{ v \}$. \CommentSty{\textcolor{blue}{// Case 3}}
        }
        \Else{
            \CommentSty{\textcolor{blue}{// Case 4: $0 < x_e^\star < 1$ for all $e \in F$ and $|F \cap \delta(v)| \geq 2$ for all $v \in W$}} \\
            \Cref{lem:extreme-point} implies that $(W, F)$ is a vertex-disjoint union $\mathcal C$ of odd cycles. \\
            \ForEach{$C \in \mathcal C$}{
                Find a maximum weight $(\tilde b + d_{\max}, d)$-feasible demand matching $M_1^C$ in $C$. \\
                $M_1 \leftarrow M_1 \cup M_1^C$.
            }
            $M_2 \leftarrow \bigcup_{(V_C, E_C) \in \mathcal C} E_C$. \\
            $F \leftarrow \emptyset$.
        }
    }
    \Return{$\argmax_{M \in \{ M_1, M_2 \}} w(M)$}
\end{algorithm}

We analyze \Cref{alg:better2}. Consider the execution of \Cref{alg:better2} on input $(G = (V, E), b, d, w)$. Without loss of generality, we assume that the execution has at least one iteration. Let $T \in \NN$ be the number of iterations. For $i \in [T]$, we use $M_1^i$, $W_i$, $F_i$, $\tilde b_i$, and $x^{(i)}$ to denote the values of $M_1$, $W$, $F$, $\tilde b$, and $x^\star$, respectively, at the \emph{end} of the $i^\text{th}$ iteration. Moreover, we use $M_1^0$, $W_0$, $F_0$, and $\tilde b_0$ to denote the corresponding values before the first iteration. For $i \in \{ 0, \ldots, T \}$, we use $\LPOPT_i$ to denote the optimum of $\LP(W_i, F_i, \tilde b_i)$. Note that $x^{(i)} \in \RR^{F_{i - 1}}$ is an optimal extreme point solution to $\LP(W_{i - 1}, F_{i - 1}, \tilde b_{i - 1})$ for $i \in [T]$ by these definitions.

\begin{lemma} \label{lem:better2-polytime}
    \Cref{alg:better2} terminates in polynomial time.
\end{lemma}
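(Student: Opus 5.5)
The plan is to bound the number of iterations of the while loop by a polynomial in $|V|+|E|$, and then to check that each iteration runs in polynomial time.

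For the iteration count, I will use the potential $\Phi := |F| + |W|$, which is at most $|E|+|V|$ at the start. Case 1 and Case 2 each remove one edge from $F$ and leave $W$ unchanged. Case 3 removes one vertex from $W$ and leaves $F$ unchanged. So each of these cases decreases $\Phi$ by exactly one. Case 4 sets $F \leftarrow \emptyset$, after which the loop exits. Hence there are at most $|E|+|V|+1$ iterations.

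Next I must check that every iteration is well defined, meaning that $\LP(W, F, \tilde b)$ has an optimal extreme point. I will maintain the invariant that $\tilde b$ is nonnegative on $W$.

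1. Initially $\tilde b = b \geq 0$.
2. In Case 2, the value $x^\star$ is feasible with $x^\star_e = 1$. Hence for each endpoint $u \in W$ of $e$, we have $d(e) \leq \tilde b(u) - \sum_{f \in F\cap\delta(u), f\neq e} x^\star_f d(f) \leq \tilde b(u)$. So $\tilde b(u)$ stays nonnegative after the update.
3. Endpoints not in $W$ impose no constraint, so their $\tilde b$ value may become negative harmlessly.

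Under this invariant, $x=0$ is feasible and the feasible region lies in the box $[0,1]^F$, so it is a nonempty polytope. It therefore has an optimal extreme point, which can be found in polynomial time by standard LP algorithms (the ellipsoid or interior-point method, followed by moving to a vertex). The encoding size of $\tilde b$ stays polynomial, because it is always $b$ minus a sum of at most $|E|$ input demands.

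The remaining operations in an iteration are also polynomial.

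1. Checking Cases 1–3 is a linear scan.
2. In Case 4, \Cref{lem:extreme-point} applies and gives a disjoint union of odd cycles.
3. Each maximum weight $(\tilde b + d_{\max}, d)$-feasible demand matching on a cycle $C$ is computed in polynomial time by \Cref{lem:dm-cycle-polytime}. This requires the no-bottleneck assumption for the capacities $\tilde b + d_{\max}$ on $C$. It holds because every vertex of $C$ lies in $W$, so $\tilde b \geq 0$ there, which gives $d(e) \leq d_{\max} \leq \tilde b(v) + d_{\max}$.

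The only mildly delicate step is the nonnegativity invariant, which guarantees both feasibility of the LP and applicability of \Cref{lem:dm-cycle-polytime}; the rest is routine bookkeeping.
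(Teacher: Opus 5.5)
Your proposal is correct and follows essentially the same route as the paper: bound the number of iterations by roughly $|V|+|E|$ because each iteration shrinks $F$ or $W$, and use nonnegativity of $\tilde b$ on $W$ to get the no-bottleneck assumption for the relaxed cycle instances, so that \Cref{lem:dm-cycle-polytime} applies. You additionally prove the invariant $\tilde b \geq 0$ on $W$ (via feasibility of $x^\star$ with $x^\star_e = 1$ in Case 2) and note that the encoding size of $\tilde b$ stays polynomial; the paper takes both for granted.
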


\begin{proof}
    In each iteration, \Cref{alg:better2} either removes at least one edge from $F \subseteq E$, or removes a vertex from $W \subseteq V$. Hence, \Cref{alg:better2} terminates in at most $|V| + |E|$ iterations.

    Consider an odd cycle $C = (V_C, E_C)$ in case 4. Since $V_C \subseteq W$, we have that $\tilde b$ is nonnegative on $V_C$. Hence, the \emph{relaxed} instance restricted to $C$ with vertex capacities $\tilde b|_{V_C} + d_{\max}$ and edge demands $d|_{E_C}$ satisfies the no-bottleneck assumption. By \Cref{lem:dm-cycle-polytime}, finding a maximum weight $(\tilde b + d_{\max}, d)$-feasible demand matching in $C$ can be done in polynomial time. Therefore, each iteration runs in polynomial time. This completes the proof.
\end{proof}

\begin{lemma} \label{lem:better2-capacity}
    Let $M \subseteq E$ be the output of \Cref{alg:better2}. Then $d(M \cap \delta(v)) \leq b(v) + d_{\max}$ for all $v \in V$.
\end{lemma}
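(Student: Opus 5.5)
We treat the two possible outputs separately, since $M$ is either $M_1$ or $M_2$ at termination.

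\emph{The set $M_2$.} This set is either empty or the edge set of the vertex-disjoint odd cycles found in Case 4. Each vertex $v$ meets at most two edges of $M_2$. If it meets none, the bound is trivial. If it meets some, then $v \in W$ at that point and has exactly two incident edges $e, f \in M_2$. In that case $d(M_2 \cap \delta(v)) = d(e) + d(f) \le b(v) + d_{\max}$, using the no-bottleneck assumption $d(e) \le b(v)$ and $d(f) \le d_{\max}$.

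\emph{The set $M_1$.} We track the residual capacity $\tilde b$ together with the following invariant. For every $v$ still in $W$, the current $\tilde b(v)$ equals $b(v) - d(M_1 \cap \delta(v))$. Moreover, $x^\star$ restricted to $F$ certifies $\tilde b(v) \ge 0$ for $v \in W$: each iteration's extreme point optimum is feasible for the current LP, and after Cases 1, 2, 3 the restriction of $x^\star$ stays feasible for the new LP, since fixing an edge at value $1$ subtracts exactly its contribution. Hence edges added in Case 2 never push a vertex of $W$ over $b(v)$, so $d(M_1 \cap \delta(v)) \le b(v)$ while $v \in W$. We then consider how $v$ can leave this regime.

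\emph{Case 3 removal.} When $v$ is removed from $W$ in Case 3, at that moment $v$ has at most one incident edge $e$ in $F$, and $\tilde b(v) \ge 0$. Afterwards only edges of $F \cap \delta(v)$ can be added to $M_1$ at $v$: in Case 2, and not in Case 4, since Case 4 cycles avoid $V \setminus W$ by \Cref{lem:extreme-point}. Thus at most $e$ is added, giving $d(M_1 \cap \delta(v)) \le b(v) - \tilde b(v) + d(e) \le b(v) + d_{\max}$.

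\emph{Case 4.} If $v$ is still in $W$ when Case 4 occurs, the edges added at $v$ form a $(\tilde b + d_{\max}, d)$-feasible demand matching on the cycle. So the total demand at $v$ is at most $(b(v) - \tilde b(v)) + \tilde b(v) + d_{\max}$.

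The main subtlety is rigorously justifying that the invariant $\tilde b(v) = b(v) - d(M_1 \cap \delta(v)) \ge 0$ holds on $W$ throughout. This needs the feasibility argument for $x^\star$ after each fixing step, and the observation that vertices outside $W$ never receive Case 4 edges.
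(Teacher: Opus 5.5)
Your proposal is correct and follows essentially the same route as the paper. You use the same split into $M_2$, bounded by the no-bottleneck assumption on the two cycle edges at $v$, and $M_1$, handled via the invariant $d(M_1 \cap \delta(v)) + \tilde b(v) = b(v)$ with separate Case 3 and Case 4 subcases. The differences are minor: you explicitly prove $\tilde b \geq 0$ on $W$ via feasibility of the restricted $x^\star$, which the paper uses only tacitly, and you bound the Case 3 overshoot directly by $d(e) \leq d_{\max}$ rather than through $\sum_{e \in F_{j-1} \cap \delta(v)} (1 - x_e) d(e)$.
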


\begin{proof}
    Let $v \in V$. Let $M_1, M_2 \subseteq E$ be the results of the two rounding strategies of \Cref{alg:better2}, respectively. First, we consider $M_2$; suppose without loss of generality that $M_2 \neq \emptyset$. If $v$ does not belong to any cycle in the vertex-disjoint union $\mathcal C$ guaranteed by \Cref{lem:extreme-point} in case 4, then $M_2 \cap \delta(v) = \emptyset$. Otherwise, let $e, f \in E$ be the two edges incident to $v$ in some odd cycle $C \in \mathcal C$, and the no-bottleneck assumption implies that $\max\{ d(e), d(f) \} \leq b(v)$; hence,
    $$ d(M_2 \cap \delta(v)) = d(e) + d(f) \leq b(v) + d_{\max}. $$
    
    Now, we consider $M_1$. First, suppose that $v$ is removed from $W$ in iteration $j$ of \Cref{alg:better2}. It is not hard to show by induction that for all $i \in \{ 0, \ldots, j - 1\}$,
    \begin{equation} \label{eq:induction}
        d(M_1^i \cap \delta(v)) + \tilde b_i(v) = b(v).
    \end{equation}
    Let $x \in \RR^{F_{j - 1}}$ be a feasible solution to $\LP(W_{j - 1}, F_{j - 1}, \tilde b_{j - 1})$. Then $0 \leq x_e \leq 1$, and
    $$ \sum_{e \in F_{j - 1} \cap \delta(v)} x_e d(e) \leq \tilde b_{j - 1}(v) = b(v) - d(M_1^{j - 1} \cap \delta(v)). $$
    Since $v$ is removed from $W$ in the $j^\text{th}$ iteration, we have $|F_{j - 1} \cap \delta(v)| \leq 1$. Since $M_1 \subseteq M_1^{j - 1} \cup F_{j - 1}$,
    \begin{multline*}
        d(M_1 \cap \delta(v)) \leq d(M_1^{j - 1} \cap \delta(v)) + d(F_{j - 1} \cap \delta(v)) \leq b(v) - \sum_{e \in F_{j - 1} \cap \delta(v)} x_e d(e) + d(F_{j - 1} \cap \delta(v)) \\
        = b(v) + \sum_{e \in F_{j - 1} \cap \delta(v)} \left(1 - x_e\right) d(e) \leq b(v) + |F_{j - 1} \cap \delta(v)| \cdot 1 \cdot d_{\max} \leq b(v) + d_{\max}.
    \end{multline*}

    Suppose that $v$ is never removed from $W$ in \Cref{alg:better2} and that, if case 4 occurs, $v$ does not belong to any cycle in the vertex-disjoint union $\mathcal C$ guaranteed by \Cref{lem:extreme-point} in case 4. Then \eqref{eq:induction} holds for all $i \in \{ 0, \ldots, T \}$. Hence,
    $$ d(M_1 \cap \delta(v)) = d(M_1^T \cap \delta(v)) = b(v) - \tilde b_T(v) \leq b(v). $$
    It remains to consider the case where $v$ belongs to an odd cycle $C$ in the vertex-disjoint union $\mathcal C$ guaranteed by \Cref{lem:extreme-point} in case 4 in iteration $T$. Then \eqref{eq:induction} holds for all $i \in \{ 0, \ldots, T - 1 \}$. Since we take a maximum weight $(\tilde b_T + d_{\max}, d)$-feasible demand matching $M_1^C$ in $C$, we have $d(M_1^C \cap \delta(v)) \leq \tilde b_T(v) + d_{\max}$. Therefore,
    $$ d(M_1 \cap \delta(v)) = d(M_1^{T - 1} \cap \delta(v)) + d(M_1^C \cap \delta(v)) \leq b(v) - \tilde b_{T - 1}(v) + \tilde b_T(v) + d_{\max} = b(v) + d_{\max}, $$
    where the last equality uses $\tilde b_{T - 1}(v) = \tilde b_T(v)$. Since $M \in \{ M_1, M_2 \}$, this completes the proof.
\end{proof}

\begin{lemma} \label{lem:better2-weight}
    Let $M \subseteq E$ be the output of \Cref{alg:better2}. Then $w(M)$ is at least $6/7$ times the optimum of \eqref{eq:lp}. If $G$ is bipartite, then $w(M)$ is at least the optimum of \eqref{eq:lp}.
\end{lemma}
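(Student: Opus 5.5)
We follow the standard iterative-relaxation accounting: track how the LP optimum evolves across iterations, and charge the final odd-cycle step to \Cref{lem:dm-cycle-lp}.

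\textbf{Invariant for cases 1--3.} We first show that for every iteration $i \in [T]$ that is not of case 4,
\[ \LPOPT_{i-1} \leq w(M_1^i \setminus M_1^{i-1}) + \LPOPT_i . \]
\begin{itemize}
    \item In case 1, the edge $e$ has $x^{(i)}_e = 0$. Hence the restriction of $x^{(i)}$ to $F_i$ is feasible for the new LP and has the same value.
    \item In case 2, the edge $e$ has $x^{(i)}_e = 1$. The restriction of $x^{(i)}$ is feasible for the new LP with capacities reduced by $d(e)$, and its value is $\LPOPT_{i-1} - w(e)$.
    \item In case 3, dropping a constraint only relaxes the LP.
\end{itemize}

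\textbf{The final iteration.} If case 4 never occurs, then $F_T = \emptyset$, so $\LPOPT_T = 0$. Telescoping gives $w(M_1) \geq \LPOPT_0$, which is the optimum of \eqref{eq:lp}, and we are done.

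Otherwise iteration $T$ is case 4, with odd cycles $\mathcal C$ and $x := x^{(T)}$, which is optimal for $\LP(W_{T-1}, F_{T-1}, \tilde b_{T-1})$. Telescoping over the first $T-1$ iterations gives
\[ \OPT_{LP} \leq w(M_1^{T-1}) + \sum_{C \in \mathcal C} \sum_{e \in E_C} x_e w(e). \]

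\textbf{Per-cycle bound.} Fix $C \in \mathcal C$. Since $V_C \subseteq W$, every constraint at a vertex of $C$ is present in the LP, so the restriction $x|_{E_C}$ is feasible for \eqref{eq:lp} on $(C, \tilde b_{T-1}|_{V_C}, d, w)$.

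The main subtlety is that \Cref{lem:dm-cycle-lp} requires an \emph{extreme point optimal} solution on the cycle instance. I see two ways to get this.
\begin{itemize}
    \item Directly: the LP decomposes into independent blocks, one per cycle. So $x|_{E_C}$ is optimal for its block, and it is an extreme point there because the tight constraint matrix is block diagonal and has full rank in each block.
    \item Alternatively, the proof of \Cref{lem:dm-cycle-lp} only uses feasibility of $x^\star$. Feasibility yields $x^\star(\delta(v)) < 1$ on $V_1$, and then the argument compares against \eqref{eq:fm-lp}. So the lemma holds for any feasible point.
\end{itemize}
I would note the second observation to avoid fuss.

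Applying the lemma with $\Delta = d_{\max}$ (so that $\Delta \geq$ the maximum demand on $C$, and with capacities $\tilde b_T = \tilde b_{T-1}$ on $V_C$) gives
\[ \sum_{e \in E_C} x_e w(e) \leq w(M_1^C) + w(E_C)/6 . \]

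\textbf{Better-of-two.} Summing the per-cycle bounds,
\[ \OPT_{LP} \leq w(M_1) + w(M_2)/6 \leq \tfrac{7}{6} \max\{ w(M_1), w(M_2) \} . \]

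\textbf{Bipartite case.} If $G$ is bipartite, it contains no odd cycles. By \Cref{lem:extreme-point}, case 4 then yields $\mathcal C = \emptyset$, which cannot happen while $F \neq \emptyset$. Hence case 4 never occurs with a nonempty $F$, and the first branch gives $w(M) \geq w(M_1) \geq \OPT_{LP}$.

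The main obstacle is only the extreme-point hypothesis bookkeeping in the per-cycle step; everything else is routine telescoping.
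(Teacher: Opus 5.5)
Your proposal is correct and follows the paper's proof essentially step for step: the same per-iteration invariant for cases 1--3, telescoping to $w(M_1^{T-1}) + \LPOPT_{T-1}$, applying \Cref{lem:dm-cycle-lp} to each cycle with $\Delta = d_{\max}$, and the better-of-two bound $\LPOPT \leq w(M_1) + w(M_2)/6$. The paper handles the extreme-point hypothesis by your first option, since the residual LP decomposes over the vertex-disjoint cycles. Your alternative remark is also valid: the proof of \Cref{lem:dm-cycle-lp} uses only the feasibility of $x^\star$.
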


\begin{proof}
    It is not hard to see that for every iteration indexed by $i \in [T]$ in which one of cases 1 to 3 occurs,
    $$ w(M_1^{i - 1}) + \LPOPT_{i - 1} \leq w(M_1^i) + \LPOPT_i. $$
    Indeed,
    \begin{itemize}[itemsep=0pt]
        \item deleting an edge $e$ with $x_e = 0$ preserves the current LP solution and optimum;
        \item fixing an edge $e$ with $x_e = 1$ decreases the residual LP objective by $w(e)$, while increasing $w(M_1)$ by $w(e)$;
        \item dropping a constraint enlarges the feasible region and therefore cannot decrease the residual LP optimum.
    \end{itemize}
    
    If case 4 never occurs, then $w(M)$ is at least the optimum of \eqref{eq:lp} by noting that $M_1^0 = \emptyset$, that $w(M_1^T) = w(M)$, that $\LPOPT_T = 0$, and that $\LPOPT_0$ is equal to the optimum of \eqref{eq:lp}. If $G$ is bipartite, then there is no odd cycle in $G$, so case 4 never occurs.

    Suppose that case 4 occurs in iteration $T$. Let $M_1, M_2 \subseteq E$ be the subsets produced by the two rounding strategies of \Cref{alg:better2}, respectively. Let $\mathcal C$ be the vertex-disjoint union of odd cycles guaranteed by \Cref{lem:extreme-point}. Let $\LPOPT$ be the optimum of \eqref{eq:lp}. Then
    $$ \LPOPT \leq w(M_1^{T - 1}) + \sum_{e \in F_{T - 1}} x_e^{(T)} w(e). $$
    For $C \in \mathcal C$, let $M_1^C$ be a maximum weight $(\tilde b_{T - 1} + d_{\max}, d)$-feasible demand matching in $C$. Then
    $$ M_1 = M_1^{T - 1} \cup \bigcup_{C \in \mathcal C} M_1^C. $$
    For each $C = (V_C, E_C) \in \mathcal C$, since $V_C \subseteq W_{T - 1}$, we have that $\tilde b_{T - 1}$ is nonnegative on $V_C$. Since the odd cycles in $\mathcal C$ are vertex-disjoint and since the residual LP decomposes over the odd cycles in $\mathcal C$, the restriction of $x^{(T)}$ to each odd cycle $C = (V_C, E_C) \in \mathcal C$ is an extreme point optimal solution to \eqref{eq:lp} on the instance $(C, \tilde b_{T - 1}|_{V_C}, d|_{E_C}, w|_{E_C})$. Applying \Cref{lem:dm-cycle-lp} to each odd cycle in $\mathcal C$ yields that
    \begin{align*}
        \LPOPT &\leq w(M_1^{T - 1}) + \sum_{e \in F_{T - 1}} x_e^{(T)} w(e) = w(M_1^{T - 1}) + \sum_{(V_C, E_C) \in \mathcal C} \sum_{e \in E_C} x_e^{(T)} w(e) \\
        &\leq w(M_1^{T - 1}) + \sum_{C = (V_C, E_C) \in \mathcal C} \left(w(M_1^C) + \frac{w(E_C)}{6}\right) \\
        &= w(M_1^{T - 1}) + \sum_{C \in \mathcal C} w(M_1^C) + \frac{1}{6} \sum_{(V_C, E_C) \in \mathcal C} w(E_C) \\
        &= w(M_1) + \frac{w(F_{T - 1})}{6}.
    \end{align*}
    On the other hand,
    $$ w(M_2) = \sum_{(V_C, E_C) \in \mathcal C} w(E_C) = w(F_{T - 1}). $$
    Since $w(M) = \max \{ w(M_1), w(M_2) \}$,
    $$ \LPOPT \leq w(M_1) + \frac{w(M_2)}{6} \leq w(M) + \frac{w(M)}{6} = \frac{7}{6} \cdot w(M). $$
    This completes the proof.
\end{proof}

\Cref{lem:better2-polytime,lem:better2-capacity,lem:better2-weight} together prove \Cref{thm:better2}.
\section{Parametric Bicriteria Guarantees} \label{sec:parametric}

In this section, we generalize the $(7/6, 1)$-bicriteria approximation algorithm from \cref{sec:better2} to provide parametric bicriteria guarantees $(\alpha^\mathrm{UB}(\beta), \beta)$ for $\beta \geq 1$, where $\alpha^\mathrm{UB}$ is defined in \eqref{eq:alpha-ub}. In particular, this gives a $(1, 4/3)$-bicriteria approximation algorithm.

We state the algorithm in \Cref{alg:parametric-nba}. This algorithm is identical to \Cref{alg:better2} except for the handling of odd cycles, where \Cref{alg:parametric-nba} takes a maximum weight $(\tilde b + \beta d_{\max}, d)$-feasible demand matching in each odd cycle for the first strategy and \Cref{alg:better2} is the special case where $\beta = 1$.

\begin{algorithm}
    \caption{A parametric bicriteria approximation algorithm for \textsc{Demand Matching}.}
    \label{alg:parametric-nba}
    \KwIn{a graph $G = (V, E)$, $b : V \to \QQ_{\geq 0}$, $d : E \to \QQ_{> 0}$, $w : E \to \QQ_{\geq 0}$, and $\beta \geq 1$, such that the no-bottleneck assumption is satisfied.}
    \KwOut{$M \subseteq E$.}
    $M_1 \leftarrow \emptyset$, $M_2 \leftarrow \emptyset$, $W \leftarrow V$, $F \leftarrow E$, $\tilde b \leftarrow b$, $d_{\max} \leftarrow \max_{e \in E} d(e)$. \\
    \While{$F \neq \emptyset$}{
        Let $x^\star \in \RR^F$ be an extreme point optimal solution to $\LP(W, F, \tilde b)$. \\
        \uIf{$x_e^\star = 0$ for some $e \in F$}{
            $F \leftarrow F \setminus \{ e \}$. \CommentSty{\textcolor{blue}{// Case 1}}
        }
        \uElseIf{$x_e^\star = 1$ for some $e = uv \in F$}{
            $M_1 \leftarrow M_1 \cup \{ e \}$, $F \leftarrow F \setminus \{ e \}$, $\tilde b(u) \leftarrow \tilde b(u) - d(e)$, $\tilde b(v) \leftarrow \tilde b(v) - d(e)$. \CommentSty{\textcolor{blue}{// Case 2}}
        }
        \uElseIf{$|F \cap \delta(v)| \leq 1$ for some $v \in W$}{
            $W \leftarrow W \setminus \{ v \}$. \CommentSty{\textcolor{blue}{// Case 3}}
        }
        \Else{
            \CommentSty{\textcolor{blue}{// Case 4: $0 < x_e^\star < 1$ for all $e \in F$ and $|F \cap \delta(v)| \geq 2$ for all $v \in W$}} \\
            \Cref{lem:extreme-point} implies that $(W, F)$ is a vertex-disjoint union $\mathcal C$ of odd cycles. \\
            \ForEach{$C \in \mathcal C$}{
                Find a maximum weight $(\tilde b + \beta d_{\max}, d)$-feasible demand matching $M_1^C$ in $C$. \\
                $M_1 \leftarrow M_1 \cup M_1^C$.
            }
            $M_2 \leftarrow \bigcup_{(V_C, E_C) \in \mathcal C} E_C$. \\
            $F \leftarrow \emptyset$.
        }
    }
    \Return{$\argmax_{M \in \{ M_1, M_2 \}} w(M)$}
\end{algorithm}

The analysis of the running time and the capacity guarantee of \Cref{alg:parametric-nba} follows from the same reasoning as in \Cref{lem:better2-polytime,lem:better2-capacity} after replacing $d_{\max}$ by $\beta d_{\max}$.

\begin{lemma} \label{lem:parametric-nba-polytime}
    \Cref{alg:parametric-nba} terminates in polynomial time.
\end{lemma}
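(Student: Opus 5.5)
The plan is to mirror the proof of \Cref{lem:better2-polytime}, since \Cref{alg:parametric-nba} differs from \Cref{alg:better2} only in the capacities used for the per-cycle demand matching in case 4. First, I will bound the number of iterations. Each of cases 1--3 either removes an edge from $F \subseteq E$ or removes a vertex from $W \subseteq V$, and case 4 sets $F \leftarrow \emptyset$, which ends the loop. Hence there are at most $|V| + |E|$ iterations.

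Second, I will check that each iteration runs in polynomial time. An extreme point optimal solution of $\LP(W, F, \tilde b)$ can be computed in polynomial time by standard linear programming methods, and the bookkeeping in cases 1--3 is trivial. For case 4, \Cref{lem:extreme-point} applies, since its hypotheses are exactly the case-4 conditions. So $(W, F)$ decomposes into vertex-disjoint odd cycles, and $\mathcal C$ can be read off directly. For each $C = (V_C, E_C) \in \mathcal C$, we have $V_C \subseteq W$, so $\tilde b$ is nonnegative on $V_C$. Since $\beta \geq 1$, this gives
\[
d(e) \leq d_{\max} \leq \tilde b(v) + \beta d_{\max} \quad \text{for all } v \in V_C \text{ and } e \in \delta(v) \cap E_C .
\]
Thus the instance $(C, \tilde b|_{V_C} + \beta d_{\max}, d|_{E_C}, w|_{E_C})$ satisfies the no-bottleneck assumption. \Cref{lem:dm-cycle-polytime} then gives a maximum weight $(\tilde b + \beta d_{\max}, d)$-feasible demand matching on $C$ in polynomial time. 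There are at most $|V|$ cycles, so case 4 is polynomial overall.

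The only point needing care is the no-bottleneck verification in the relaxed cycle instance. This is where $\beta \geq 1$, or indeed just $\beta \geq 0$, is used, together with the fact that cycle vertices lie in $W$ so that $\tilde b$ is nonnegative there. I do not expect a real obstacle beyond this check.
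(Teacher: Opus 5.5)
Your proof is correct and takes the same route as the paper: the paper obtains this lemma by repeating the argument of \Cref{lem:better2-polytime} with $d_{\max}$ replaced by $\beta d_{\max}$. That argument gives at most $|V|+|E|$ iterations and, in case 4, uses $\tilde b \geq 0$ on $W$ to verify the no-bottleneck assumption for each relaxed cycle instance so that \Cref{lem:dm-cycle-polytime} applies. One small correction to your closing remark: the check $d(e) \leq d_{\max} \leq \tilde b(v) + \beta d_{\max}$ genuinely needs $\beta \geq 1$, not merely $\beta \geq 0$, because after fixing edges $\tilde b(v)$ can be far below $d(e)$ (e.g., $\tilde b(v) = \epsilon$ when $b(v) = D + \epsilon$ and an incident edge of demand $D$ has been fixed). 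This does not affect the lemma, since the algorithm's input has $\beta \geq 1$.
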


\begin{lemma} \label{lem:parametric-nba-capacity}
    Let $M \subseteq E$ be the output of \Cref{alg:parametric-nba}. Then $d(M \cap \delta(v)) \leq b(v) + \beta d_{\max}$ for all $v \in V$.
\end{lemma}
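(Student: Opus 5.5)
We follow the proof of \Cref{lem:better2-capacity} verbatim, replacing $d_{\max}$ by $\beta d_{\max}$ wherever the relaxation enters, and using $\beta \geq 1$ wherever the original argument produced a slack of exactly $d_{\max}$. Fix $v \in V$ and treat the two candidate outputs $M_1$ and $M_2$ of \Cref{alg:parametric-nba} separately. Since $M \in \{M_1, M_2\}$, it suffices to bound both.

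\textbf{The set $M_2$.} The set $M_2$ is nonempty only if case 4 occurs. If $v$ lies on none of the odd cycles of $\mathcal C$, then $M_2 \cap \delta(v) = \emptyset$. Otherwise $v$ has exactly two incident edges $e, f$ in $M_2$. The no-bottleneck assumption gives $d(e) \leq b(v)$, hence
\[
d(M_2 \cap \delta(v)) \leq b(v) + d_{\max} \leq b(v) + \beta d_{\max},
\]
using $\beta \geq 1$.

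\textbf{The set $M_1$.} First, by induction over iterations as in \eqref{eq:induction}, $d(M_1^i \cap \delta(v)) + \tilde b_i(v) = b(v)$ holds while $v \in W$. The induction is identical because cases 1--3 are unchanged. We then split into three cases.
\begin{enumerate*}[label=(\roman*)]
\item If $v$ is dropped from $W$ in case 3 at iteration $j$, the same calculation as in \Cref{lem:better2-capacity} gives $d(M_1 \cap \delta(v)) \leq b(v) + d_{\max} \leq b(v) + \beta d_{\max}$. This uses that at most one residual edge remains at $v$, with fractional deficit $(1 - x_e)d(e) \leq d_{\max}$.
\item If $v$ stays in $W$ and lies on no case-4 cycle, then $d(M_1 \cap \delta(v)) = b(v) - \tilde b_T(v) \leq b(v)$.
\item If $v$ lies on a cycle $C$ in case 4, then $M_1^C$ is $(\tilde b_{T-1} + \beta d_{\max}, d)$-feasible by construction. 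Combined with the invariant and $\tilde b_{T-1}(v) = \tilde b_T(v)$, this gives $d(M_1 \cap \delta(v)) \leq b(v) + \beta d_{\max}$.
\end{enumerate*}

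\textbf{Main obstacle.} The only point needing care is that $M_1^C$ is well defined and computable, i.e.\ that the relaxed cycle instance satisfies the no-bottleneck assumption. This holds because $\tilde b_{T-1} \geq 0$ on $V_C \subseteq W_{T-1}$, so $d(e) \leq d_{\max} \leq \tilde b(v) + \beta d_{\max}$. Strictly this matters for \Cref{lem:parametric-nba-polytime}, but it also confirms that the capacity bound of case (iii) is meaningful. Beyond this check, no new ideas are required.
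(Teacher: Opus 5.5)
Your proposal is correct and is essentially the paper's argument: the paper proves this lemma by rerunning the proof of \Cref{lem:better2-capacity} with $d_{\max}$ replaced by $\beta d_{\max}$. Your explicit use of $\beta \geq 1$ for $M_2$ and for vertices dropped in case~3, where the bound obtained is really $b(v) + d_{\max}$, is exactly what makes that substitution valid.
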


For the weight guarantee, we present a parametric generalization of \Cref{lem:dm-cycle-lp}. The proof follows the same overall argument as that of \Cref{lem:dm-cycle-lp}; we provide a proof sketch and omit repetitive details.

\begin{lemma} \label{lem:parametric-nba-weight}
    Let $C = (V, E)$ be a cycle. Let $b : V \to \QQ_{\geq 0}$. Let $d : E \to \QQ_{> 0}$ and $w : E \to \QQ_{\geq 0}$. Let $\Delta \geq d_{\max} := \max_{e \in E} d(e)$. Let $\beta \geq 1$. Let $x^\star \in \RR^E$ be an extreme point optimal solution to \eqref{eq:lp} on the instance $(C, b, d, w)$. Let $M^\star \subseteq E$ be a maximum weight (with respect to $w$) $(b + \beta \Delta, d)$-feasible demand matching in $C$. Let $\mu : [1, \infty) \to \RR$ be defined by
    \begin{equation} \label{eq:mu}
        \mu(\beta) := \left\{
            \begin{array}{ll}
                \frac{4 - 3\beta}{6} & \text{if $1 \leq \beta \leq 4/3$}, \\
                0 & \text{otherwise}.
            \end{array}
        \right.
    \end{equation}
    Then $\sum_{e \in E} x_e^\star w(e) \leq w(M^\star) + \mu(\beta) \cdot w(E)$.
\end{lemma}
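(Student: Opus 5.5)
The plan is to follow the proof of \Cref{lem:dm-cycle-lp} and track how the parameter $\beta$ propagates. As there, the relaxed instance $(C, b + \beta\Delta, d)$ satisfies the no-bottleneck assumption. By \Cref{lem:dm-cycle}, $(b+\beta\Delta,d)$-feasible demand matchings are exactly $\hat b$-matchings, where $\hat b(v)=1$ on $V_1 := \{v : d(\delta(v)) > b(v)+\beta\Delta\}$ and $\hat b(v)=\infty$ elsewhere. For $v \in V_1$ with incident edges $e_1,e_2$, feasibility of $x^\star$ gives
$$\Delta \textstyle\sum_{e\in\delta(v)}(1-x^\star_e) \ge d(\delta(v)) - \sum_{e\in\delta(v)} x^\star_e d(e) > \beta\Delta,$$
hence $x^\star(\delta(v)) < 2-\beta =: c$. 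If $\beta \ge 2$ then $V_1 = \emptyset$, so $M^\star = E$ and the claim is trivial. From now on assume $1 \le \beta < 2$, so $0 < c \le 1$.

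I then split into two cases according to whether $V_1 = V$.

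\textbf{Case $V_1 \subsetneq V$.} The $\hat b$-matching constraints live on the cycle cut open at a vertex of infinite capacity, so the constraint matrix (degree rows on $V_1$ plus bounds) is that of a disjoint union of paths. It is totally unimodular, and the polytope $\{y\in[0,1]^E : y(\delta(v))\le 1\ \forall v\in V_1\}$ is integral, i.e., it is the convex hull of $\hat b$-matchings (Balinski, as in the earlier proof). Since $c\le 1$, $x^\star$ lies in this polytope, so $\sum_e x^\star_e w(e) \le w(M^\star)$.

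\textbf{Case $V_1 = V$.} Every edge is incident to a vertex of $V_1$, so $x^\star_e \le c$ for all $e$, and $z := x^\star/c$ lies in the fractional matching polytope of the cycle. By half-integrality of its extreme points, the vertices of this polytope are matchings or (for an odd cycle) the all-$\tfrac12$ vector. Hence
$$\sum_e x^\star_e w(e) \le c\max\{w(M^\star), w(E)/2\} \le \max\{w(M^\star), \tfrac{2-\beta}{2} w(E)\},$$
using that every matching is a $\hat b$-matching and so has weight at most $w(M^\star)$.

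\textbf{Finishing.} Partitioning $C$ into three matchings gives $w(M^\star)\ge w(E)/3$, exactly as before. If the LP value exceeds $w(M^\star)$, it is at most
$$\tfrac{2-\beta}{2}w(E) = \tfrac13 w(E) + \tfrac{4-3\beta}{6}w(E) \le w(M^\star) + \tfrac{4-3\beta}{6}w(E).$$
When $\beta\ge 4/3$ we have $\tfrac{2-\beta}{2}\le\tfrac13$, so the bound is simply $w(M^\star)$. This matches $\mu(\beta)$.

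The main obstacle is the first case: one must justify that shrinking the right-hand side to $c\le 1$ still keeps $x^\star$ inside the integral $\hat b$-matching polytope on the path structure. This is immediate by monotonicity of the constraints, but the integrality claim must be stated for the correct polytope, namely degree constraints only on $V_1$, with variables on edges incident to non-$V_1$ vertices bounded only by $1$.
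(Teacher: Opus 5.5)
Your proposal is correct and follows essentially the same route as the paper's proof sketch. You derive $x^\star(\delta(v)) < 2-\beta$ on $V_1$, split on $V_1 = V$ (scale by $2-\beta$ and use half-integrality of the fractional matching polytope, then $w(M^\star)\ge w(E)/3$) versus $V_1 \subsetneq V$ (integrality on paths). Your explicit disposal of $\beta \ge 2$ and your handling of $\beta > 4/3$ via $\tfrac{2-\beta}{2}\le\tfrac13$ are only cosmetic variants of the paper's reasoning.
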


\begin{proof}[Proof sketch]
    Note that an instance with vertex capacities $b + \beta \Delta$ and edge demands $d$ always satisfies the no-bottleneck assumption. By \Cref{lem:dm-cycle}, a subset $M \subseteq E$ is a $(b + \beta \Delta, d)$-feasible demand matching if and only if $M$ is a $\hat b$-matching, where $\hat b : V \to \QQ_{\geq 0} \cup \{ \infty \}$ is defined by
    $$ \hat b(v) := \left\{
        \begin{array}{ll}
            1 & \text{if $d(\delta(v)) > b(v) + \beta \Delta$}, \\
            \infty & \text{otherwise}.
        \end{array}
    \right. $$
    
    Let $V_1 := \{ v \in V : \hat b(v) = 1 \}$. Let $v \in V_1$. Then $|\delta(v)| = 2$ and $d(\delta(v)) > b(v) + \beta \Delta$. By the same argument as in \Cref{lem:dm-cycle-lp}, the feasibility of $x^\star$ in \eqref{eq:lp} on the instance $(C, b, d, w)$ implies that for all $v \in V_1$,
    $$ x^\star(\delta(v)) = x_{e_1}^\star + x_{e_2}^\star < 2 - \beta. $$

    We have two cases. First, suppose that $V_1 = V$. Then $\beta < 2$ and $x^\star / (2 - \beta)$ is feasible in \eqref{eq:fm-lp}. Let $\LPOPT := \sum_{e \in E} x_e^\star w(e)$. By the same argument as in \Cref{lem:dm-cycle-lp}, the half-integrality of extreme points of \eqref{eq:fm-lp} implies that
    $$ \frac{\LPOPT}{2 - \beta} \leq \max\left\{ w(M^\star), \frac{w(E)}{2} \right\}. $$
    If $w(M^\star) \geq w(E)/2$, then $\LPOPT \leq (2 - \beta) \cdot w(M^\star) \leq w(M^\star) \leq w(M^\star) + \mu(\beta) \cdot w(E)$, where the second inequality follows from the assumption that $\beta \geq 1$, and we are done. Otherwise,
    $$ \LPOPT \leq (2 - \beta) \cdot \frac{w(E)}{2} = \frac{w(E)}{3} + \frac{4 - 3\beta}{6} \cdot w(E) \leq w(M^\star) + \mu(\beta) \cdot w(E), $$
    where the second inequality follows from the facts that $w(M^\star) \geq w(E)/3$ and that $\mu(\beta) \geq (4-3\beta)/6$ for all $\beta \geq 1$. Second, suppose that $V_1 \subsetneq V$. Then \eqref{eq:fm-lp} is the maximum weight fractional matching LP on a vertex-disjoint union of paths. Since $\beta \geq 1$, we have $x^\star(\delta(v)) < 2 - \beta \leq 1$ for all $v \in V_1$, so $x^\star$ is feasible in \eqref{eq:fm-lp}. By the same argument as in \Cref{lem:dm-cycle-lp}, the integrality of extreme points of \eqref{eq:fm-lp} in this case implies that
    $$ \LPOPT \leq w(M^\star) \leq w(M^\star) + \mu(\beta) \cdot w(E), $$
    where the second inequality uses the fact that $\mu$ is nonnegative. This completes the proof.
\end{proof}

The weight guarantee of \Cref{alg:parametric-nba} follows from the same reasoning as in \Cref{lem:better2-weight}.

\begin{corollary} \label{cor:parametric-nba-weight}
    Let $M \subseteq E$ be the output of \Cref{alg:parametric-nba}. Then $w(M)$ is at least $1/(1 + \mu(\beta))$ times the optimum of \eqref{eq:lp}, where $\mu : [1, \infty) \to \RR$ is defined in \eqref{eq:mu}.
\end{corollary}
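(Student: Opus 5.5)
The plan is to mirror the proof of \Cref{lem:better2-weight} step by step, replacing \Cref{lem:dm-cycle-lp} with \Cref{lem:parametric-nba-weight}. Consider an execution of \Cref{alg:parametric-nba} and use the same notation $M_1^i$, $W_i$, $F_i$, $\tilde b_i$, $x^{(i)}$, $\LPOPT_i$ as in \cref{sec:better2}. The first step is the potential argument. For every iteration $i$ in which one of cases 1 to 3 occurs, we have $w(M_1^{i-1}) + \LPOPT_{i-1} \leq w(M_1^i) + \LPOPT_i$. The justification is the same as before: deleting a zero edge preserves the LP optimum, fixing a one-edge transfers exactly $w(e)$ from the LP to $M_1$, and dropping a constraint only enlarges the feasible region. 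The algorithm differs from \Cref{alg:better2} only in case 4, so this part carries over verbatim. If case 4 never occurs, telescoping gives $w(M) \geq w(M_1^T) = \LPOPT_0$, which is at least $\LPOPT_0/(1+\mu(\beta))$ because $\mu \geq 0$.

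Now suppose case 4 occurs in the last iteration $T$. Telescoping gives $\LPOPT \leq w(M_1^{T-1}) + \sum_{e \in F_{T-1}} x^{(T)}_e w(e)$. By \Cref{lem:extreme-point}, $(W_{T-1}, F_{T-1})$ is a vertex-disjoint union $\mathcal C$ of odd cycles. The residual LP decomposes over these cycles, so the restriction of $x^{(T)}$ to each $C = (V_C, E_C)$ is an extreme point optimal solution of \eqref{eq:lp} on $(C, \tilde b_{T-1}|_{V_C}, d|_{E_C}, w|_{E_C})$. Here $\tilde b_{T-1}$ is nonnegative on $V_C \subseteq W_{T-1}$. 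We apply \Cref{lem:parametric-nba-weight} with $\Delta := d_{\max}$, the global maximum demand, which is at least the maximum demand on $C$. Its $M^\star$ is exactly the $(\tilde b_{T-1} + \beta d_{\max}, d)$-feasible maximum weight demand matching $M_1^C$ computed by the algorithm. This gives $\sum_{e\in E_C} x^{(T)}_e w(e) \leq w(M_1^C) + \mu(\beta) w(E_C)$.

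Summing over the cycles yields $\LPOPT \leq w(M_1) + \mu(\beta)\, w(F_{T-1}) = w(M_1) + \mu(\beta)\, w(M_2)$. Since $w(M) = \max\{w(M_1), w(M_2)\}$ and $\mu(\beta) \geq 0$, we get $\LPOPT \leq (1+\mu(\beta))\, w(M)$, which is the claim.

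I expect no real obstacle. The one point needing care is checking that the hypotheses of \Cref{lem:parametric-nba-weight} hold for each cycle. These are the nonnegativity of the residual capacities on $V_C$, the condition $\Delta = d_{\max} \geq \max_{e\in E_C} d(e)$, and that the restricted solution is an extreme point optimal solution. The last holds because a vertex-disjoint decomposition of the LP preserves both extremality and optimality componentwise.
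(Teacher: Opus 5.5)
Your proposal is correct and takes the same route as the paper: the potential argument from \Cref{lem:better2-weight}, then \Cref{lem:parametric-nba-weight} applied with $\Delta = d_{\max}$ to each odd cycle, then the better-of-two bound $w(M_1) + \mu(\beta)\, w(M_2) \leq (1+\mu(\beta))\, w(M)$. One small slip: in the branch where case 4 never occurs, telescoping gives $w(M_1^T) \geq \LPOPT_0$ rather than equality, since dropping constraints can raise the residual optimum; the inequality is all your argument needs.
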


\begin{proof}[Proof sketch]
    Without loss of generality, we assume that case 4 occurs. Let $M_1, M_2 \subseteq E$ be the subsets produced by the two rounding strategies of \Cref{alg:parametric-nba}, respectively. Let $F', M_1' \subseteq E$ be the values of $F$ and $M_1$, respectively, at the beginning of the iteration handling case 4. Let $\mathcal C$ be the vertex-disjoint union of odd cycles guaranteed by \Cref{lem:extreme-point}. Following the argument in \Cref{lem:better2-weight}, \Cref{lem:parametric-nba-weight} applied to each odd cycle in $\mathcal C$ yields that
    \begin{multline*}
        \LPOPT \leq w(M_1') + \sum_{C = (V_C, E_C) \in \mathcal C} \left(w(M_1^C) + \mu(\beta) \cdot w(E_C)\right) = w(M_1) + \mu(\beta) \cdot w(F') \\
        = w(M_1) + \mu(\beta) \cdot w(M_2) \leq w(M) + \mu(\beta) \cdot w(M) = (1 + \mu(\beta)) \cdot w(M).
    \end{multline*}
    This completes the proof.
\end{proof}

\Cref{lem:parametric-nba-polytime,lem:parametric-nba-capacity} and \Cref{cor:parametric-nba-weight} together prove \Cref{thm:parametric-nba}.
\section{Greedy Algorithm} \label{sec:greedy}

In this section, we present a simple greedy algorithm for the \textsc{$k$-Hypergraph Demand Matching} problem. We state the algorithm in \Cref{alg:greedy-demand-matching}.

\begin{algorithm}
    \caption{A greedy algorithm for the \textsc{$k$-Hypergraph Demand Matching} problem.}
    \label{alg:greedy-demand-matching}
    \KwIn{a hypergraph $H = (V, \mathcal E)$, $b : V \to \QQ_{\geq 0}$, $d : \mathcal E \to \QQ_{> 0}$, and $w : \mathcal E \to \QQ_{\geq 0}$.}
    \KwOut{$\mathcal M \subseteq \mathcal E$.}
    $\mathcal M \leftarrow \emptyset$. \\
    Sort hyperedges $e \in \mathcal E$ in a nonincreasing order of $w(e) / d(e)$, breaking ties arbitrarily. \\
    \ForEach{$e \in \mathcal E$ in this order}{
        \If{$d(\mathcal M \cap \delta(v)) \leq b(v)$ for all $v \in e$}{
            $\mathcal M \leftarrow \mathcal M \cup \{ e \}$.
        }
    }
    \Return{$\mathcal M$}
\end{algorithm}

First, it is not hard to see that the subset $\mathcal M \subseteq \mathcal E$ output by \Cref{alg:greedy-demand-matching} satisfies the capacity guarantee.

\begin{lemma} \label{lem:capacity}
    The subset $\mathcal M \subseteq \mathcal E$ output by \Cref{alg:greedy-demand-matching} satisfies $d(\mathcal M \cap \delta(v)) \leq b(v) + d_{\max}$ for all $v \in V$.
\end{lemma}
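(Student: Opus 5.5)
The natural approach is to track the load at a fixed vertex $v \in V$ over the course of \Cref{alg:greedy-demand-matching} and argue by looking at the last hyperedge containing $v$ that the algorithm adds. The set $\mathcal M$ only grows, so $d(\mathcal M \cap \delta(v))$ is nondecreasing over the execution. It therefore suffices to bound the load at $v$ right after the last time a hyperedge containing $v$ is added to $\mathcal M$.

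Fix $v \in V$. If no hyperedge containing $v$ is ever added, then $d(\mathcal M \cap \delta(v)) = 0 \leq b(v) + d_{\max}$, since $b(v) \geq 0$. Otherwise, let $e$ be the last hyperedge with $v \in e$ that is added to $\mathcal M$, and let $\mathcal M'$ denote the value of $\mathcal M$ just before $e$ is added. The test in the \textbf{if} condition is applied to every vertex of $e$, in particular to $v$, so it guarantees $d(\mathcal M' \cap \delta(v)) \leq b(v)$.

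After $e$ is added, no further hyperedge containing $v$ enters $\mathcal M$. Hence the final output satisfies $\mathcal M \cap \delta(v) = (\mathcal M' \cap \delta(v)) \cup \{ e \}$, and
\[
d(\mathcal M \cap \delta(v)) = d(\mathcal M' \cap \delta(v)) + d(e) \leq b(v) + d_{\max}.
\]
This proves the lemma. It uses neither the sorting order nor the no-bottleneck assumption.

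There is no real obstacle here. The one point to state carefully is that the check is made against the load \emph{before} adding $e$, which is exactly why the violation is at most a single demand $d(e) \leq d_{\max}$.
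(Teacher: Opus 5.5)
Your proof is correct and essentially the same as the paper's: both rest on the fact that the acceptance test bounds the load at $v$ by $b(v)$ \emph{before} a hyperedge is added, so any overshoot is at most one demand $d(e) \leq d_{\max}$. The paper phrases this as a one-step invariant over all accepted hyperedges, while you apply it only to the last accepted hyperedge containing $v$; the difference is cosmetic.
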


\begin{proof}
    When a hyperedge $e \in \mathcal E$ is accepted, each $v \in e$ currently has load at most $b(v)$, so after adding $e$, its load is at most $b(v) + d(e) \leq b(v) + d_{\max}$; vertices not in $e$ are unchanged.
\end{proof}

We prove the weight guarantee for \Cref{alg:greedy-demand-matching}.

\begin{lemma} \label{lem:weight}
    Let $\mathcal M \subseteq \mathcal E$ be the output of \Cref{alg:greedy-demand-matching}. Let $x^\star \in \RR^{\mathcal E}$ be a feasible solution to \eqref{eq:hdm-lp}. Then
    $$ k \cdot w(\mathcal M) \geq \sum_{e \in \mathcal E} x_e^\star w(e). $$
\end{lemma}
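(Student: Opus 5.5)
We prove the bound by charging the fractional weight of every hyperedge to vertices that are ``saturated'' by the greedy solution at the moment that hyperedge is considered. Each vertex $v$ is charged at most the weight of the accepted hyperedges at $v$. Summing over vertices gives the factor $k$, since each accepted hyperedge is counted at most once per vertex it contains.

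Fix the greedy order $e_1, \ldots, e_m$, so that $\rho_i := w(e_i)/d(e_i)$ is nonincreasing. For each vertex $v$, let $\tau(v)$ be the first index $i$ after whose processing $d(\mathcal M \cap \delta(v)) > b(v)$; set $\tau(v) = \infty$ if no such index exists. Once $v$ is blocked it stays blocked, since loads only increase. Consider any hyperedge $e_j$.

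\begin{itemize}
    \item If $e_j \in \mathcal M$, we charge $x_{e_j}^\star w(e_j) \leq w(e_j)$ to $e_j$ itself.
    \item If $e_j \notin \mathcal M$, then some $v \in e_j$ had load exceeding $b(v)$ when $e_j$ was scanned, so $\tau(v) < j$. We charge $x_{e_j}^\star w(e_j)$ to one such vertex $v$.
\end{itemize}

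Fix a vertex $v$ and let $R_v$ be the set of rejected hyperedges charged to $v$. Every $e \in R_v$ comes after $\tau(v)$, so its density is at most $\rho_{\tau(v)}$. Feasibility of $x^\star$ at $v$ then gives
$$ \sum_{e \in R_v} x_e^\star w(e) \leq \rho_{\tau(v)} \sum_{e \in R_v} x_e^\star d(e) \leq \rho_{\tau(v)}\, b(v). $$
Let $A_v$ be the set of accepted hyperedges at $v$ with index at most $\tau(v)$. Their total demand exceeds $b(v)$, and each has density at least $\rho_{\tau(v)}$. Hence
$$ w(A_v) \geq \rho_{\tau(v)}\, d(A_v) > \rho_{\tau(v)}\, b(v) \geq \sum_{e \in R_v} x_e^\star w(e). $$

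Summing everything, the total fractional weight is at most
$$ \sum_{e \in \mathcal M} x_e^\star w(e) + \sum_{v} w(\mathcal M \cap \delta(v)). $$
This gives only $(k+1)\,w(\mathcal M)$, because an accepted edge is counted once directly and once per vertex. The main obstacle is therefore removing the extra $1$. To do so, we refine the charging so that the direct charge of an accepted hyperedge is absorbed into a vertex budget.

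For each vertex $v$, let $S_v$ be the set of all hyperedges $e \ni v$ whose fractional weight is charged to $v$: the rejected ones in $R_v$, together with the accepted ones, each of which we assign to an arbitrary one of its vertices. We aim to show
$$ \sum_{e \in S_v} x_e^\star w(e) \leq w(\mathcal M \cap \delta(v)). $$

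If $\tau(v) = \infty$, then $S_v$ contains only accepted hyperedges, and the inequality is immediate.

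Otherwise, split $S_v$ at $\tau(v)$.
\begin{itemize}
    \item Accepted hyperedges in $S_v$ with index at most $\tau(v)$ form a subset of $A_v$. Each contributes $x_e^\star w(e) = w(e) - (1 - x_e^\star) w(e)$.
    \item Hyperedges in $S_v$ with index greater than $\tau(v)$ are all rejected, since $v$ is blocked from then on. They have density at most $\rho := \rho_{\tau(v)}$.
\end{itemize}

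Feasibility of $x^\star$ at $v$ bounds the demand available to the later hyperedges:
$$ \sum_{e \in S_v,\ \text{index} > \tau(v)} x_e^\star d(e) \leq b(v) - \sum_{e \in A_v} x_e^\star d(e). $$
So the later hyperedges contribute at most $\rho\bigl(b(v) - \sum_{e \in A_v} x_e^\star d(e)\bigr)$. Since $b(v) < d(A_v)$, this is at most
$$ \rho \sum_{e \in A_v} (1 - x_e^\star) d(e) \leq \sum_{e \in A_v} (1 - x_e^\star) w(e), $$
where the last step uses that each $e \in A_v$ has density at least $\rho$.

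Adding the two parts, the charge to $v$ is at most
$$ \sum_{e \in A_v} x_e^\star w(e) + \sum_{e \in A_v} (1 - x_e^\star) w(e) = w(A_v) \leq w(\mathcal M \cap \delta(v)). $$

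Finally, summing over all vertices, every hyperedge's fractional weight is charged to exactly one vertex, and each accepted hyperedge appears in $\mathcal M \cap \delta(v)$ for at most $k$ vertices $v$. Therefore
$$ \sum_{e \in \mathcal E} x_e^\star w(e) = \sum_{v} \sum_{e \in S_v} x_e^\star w(e) \leq \sum_{v} w(\mathcal M \cap \delta(v)) \leq k \cdot w(\mathcal M). $$
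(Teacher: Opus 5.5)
Your proof is correct and is essentially the paper's argument. The core step is the same key inequality: feasibility of $x^\star$ at $v$, combined with the density threshold, bounds the rejected fractional weight charged to $v$ by $\sum_{e \in \mathcal M \cap \delta(v)} (1 - x^\star_e) w(e)$. The differences are only bookkeeping: you threshold at the blocking hyperedge $e_{\tau(v)}$ rather than at the first rejected hyperedge charged to $v$, and you absorb each accepted hyperedge's own term $x^\star_e w(e)$ into a single vertex budget $w(\mathcal M \cap \delta(v))$, whereas the paper bounds the total charge per accepted hyperedge by $k(1 - x^\star_e) w(e) + x^\star_e w(e) \leq k\, w(e)$.
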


\begin{proof}
    We say that the nonincreasing order of the hyperedges used in \Cref{alg:greedy-demand-matching} is the \emph{greedy order}. For each $e \in \mathcal E$, let $\rho(e) := w(e)/d(e)$, and let $\prev(e)$ be the set of hyperedges preceding $e$ in the greedy order. For each $e \in \mathcal R := \mathcal E \setminus \mathcal M$, fix $a(e) \in e$ so that, at the iteration where $e$ is considered,
    $$ d(\mathcal M \cap \delta(a(e)) \cap \prev(e)) > b(a(e)). $$
    For each $v \in V$, let
    $$ \mathcal R(v) := \{ e \in \delta(v) : a(e) = v \}. $$
    Let $V' := \{ v \in V : \mathcal R(v) \neq \emptyset \}$. For each $v \in V'$, let $e_v$ be the first hyperedge in $\mathcal R(v)$ in the greedy order, and let
    $$ \mathcal P(v) :=  \mathcal M \cap \delta(v) \cap \prev(e_v). $$
    Fix $v \in V'$. Since $x^\star$ is feasible in \eqref{eq:hdm-lp},
    $$ d(\mathcal P(v)) > b(v) \geq \sum_{e \in \delta(v)} x_e^\star d(e) \geq \sum_{e \in \mathcal R(v)} x_e^\star d(e) + \sum_{e \in \mathcal P(v)} x_e^\star d(e). $$
    Hence,
    $$ \sum_{e \in \mathcal R(v)} x_e^\star d(e) < \sum_{e \in \mathcal P(v)} (1 - x_e^\star) d(e). $$
    Since $\mathcal P(v) \subseteq \prev(e_v)$, we have $\rho(e) \geq \rho(e_v)$ for all $e \in \mathcal P(v)$. By the definition of $e_v$, we have $\rho(e) \leq \rho(e_v)$ for all $e \in \mathcal R(v)$. Since $\rho(e_v) = w(e_v)/d(e_v) \geq 0$,
    \begin{multline*}
        \sum_{e \in \mathcal R(v)} x_e^\star w(e) = \sum_{e \in \mathcal R(v)} x_e^\star \rho(e) d(e) \leq \rho(e_v) \sum_{e \in \mathcal R(v)} x_e^\star d(e) \\
        \leq \rho(e_v) \sum_{e \in \mathcal P(v)} (1 - x_e^\star) d(e) \leq \sum_{e \in \mathcal P(v)} (1 - x_e^\star) \rho(e) d(e) = \sum_{e \in \mathcal P(v)} (1 - x_e^\star) w(e).
    \end{multline*}
    Therefore,
    $$ \sum_{e \in \mathcal R} x_e^\star w(e) = \sum_{v \in V'} \sum_{e \in \mathcal R(v)} x_e^\star w(e) \leq \sum_{v \in V'} \sum_{e \in \mathcal P(v)} (1 - x_e^\star) w(e) \leq k\sum_{e \in \mathcal M} (1 - x_e^\star) w(e), $$
    where the last inequality follows from the fact that each hyperedge $e \in \mathcal M$ appears in $\mathcal P(v)$ for at most $k$ vertices $v \in V'$, namely the vertices contained in $e$. Hence,
    \begin{multline*}
        \sum_{e \in \mathcal E} x_e^\star w(e) = \sum_{e \in \mathcal R} x_e^\star w(e) + \sum_{e \in \mathcal M} x_e^\star w(e) \\
        \leq k\sum_{e \in \mathcal M} (1 - x_e^\star) w(e) + \sum_{e \in \mathcal M} x_e^\star w(e) = \sum_{e \in \mathcal M} (k - (k - 1) x_e^\star) w(e) \leq k \cdot w(\mathcal M).
    \end{multline*}
    This completes the proof.
\end{proof}

\Cref{lem:capacity,lem:weight} together prove \Cref{thm:main}.
\section{Tightness Result} \label{sec:tight}

In this section, we prove \Cref{thm:tight-nba}. We consider $\beta \in [0, 1)$ in \Cref{lem:tight-general-2} and $\beta \in [1, 4/3]$ in \Cref{lem:tight-nba}.

\begin{lemma} \label{lem:tight-general-2}
    Let $\beta \in [0, 1)$. Let $\varepsilon > 0$. Then there is an instance $(G, b, d, w)$ of the \textsc{Demand Matching} problem such that $w(M^\star)$ is at most $(2/(3(2 - \beta)) + \varepsilon)$ times the optimum of \eqref{eq:lp}, where $M^\star$ is an optimal $(b + \beta d_{\max}, d)$-feasible demand matching. Moreover, this instance satisfies the no-bottleneck assumption.
\end{lemma}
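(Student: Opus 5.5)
We will use a triangle, as the introduction anticipates. Let $G$ be the triangle on vertices $u_1,u_2,u_3$ with edges $e_1,e_2,e_3$. All demands are $d(e)=1$ (so $d_{\max}=1$) and all weights are $w(e)=1$. Every vertex gets the same capacity $b(v)=B:=2-\beta-\eta$ with a small parameter $\eta>0$; we will let $\eta\to 0$ at the end.

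\textbf{No-bottleneck assumption.} Since $\beta<1$, we can choose $\eta<1-\beta$, which gives $B>1=d(e)$. So the assumption holds.

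\textbf{LP value.} Set $x_e=B/2$ for every edge. Since $B<2$, this point lies in $[0,1]^E$. At each vertex the load is $x_{e_i}+x_{e_j}=B\le b(v)$, so $x$ is feasible. Its value is $3B/2$, and therefore the optimum of \eqref{eq:lp} is at least $\tfrac{3}{2}(2-\beta-\eta)$.

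\textbf{Integral value with relaxed capacities.} Each relaxed capacity is $B+\beta=2-\eta<2$. Two edges meeting at a vertex would place load $2$ there, which is infeasible. Hence, by \Cref{lem:dm-cycle} (every vertex has $\hat b=1$), any $(b+\beta d_{\max},d)$-feasible demand matching is an ordinary matching of the triangle. Such a matching has at most one edge, so $w(M^\star)\le 1$.

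\textbf{Conclusion.} Combining the two bounds,
$$ \frac{w(M^\star)}{\LPOPT}\le \frac{2}{3(2-\beta-\eta)}. $$
This ratio tends to $2/(3(2-\beta))$ as $\eta\to 0$. Given $\varepsilon>0$, choose $\eta>0$ small enough that
$$ \frac{2}{3(2-\beta-\eta)}\le \frac{2}{3(2-\beta)}+\varepsilon, $$
while also keeping $\eta<1-\beta$. Such an $\eta$ exists by continuity.

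The only delicate point is that capacity $2-\beta$ must be lowered slightly. At exactly $b=2-\beta$, the relaxed capacity equals $2$ and two adjacent edges would become feasible. The perturbation $\eta$ removes this tie, and it costs only the $\varepsilon$ slack allowed in the statement.
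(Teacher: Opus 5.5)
Your proof is correct and is essentially the paper's construction: a triangle with uniform weights and demands, capacities set just below $(2-\beta)d_{\max}$, and the uniform fractional point $x_e = b(v)/(2d(e))$. The paper only differs in scaling the demands to an integer $D$ and using $B = \lfloor (2-\beta)D \rfloor - 1$ where you use the perturbation $\eta$. One small point the paper's integer scaling handles automatically: capacities must lie in $\QQ$, so when $\beta$ is irrational you should also choose $\eta$ with $2-\beta-\eta \in \QQ$, which is possible because such $\eta$ are dense.
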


\begin{proof}
    Let $G = (V, E)$ be the triangle, i.e., $V := \{ v_1, v_2, v_3 \}$ and $E := \{ v_1 v_2, v_2 v_3, v_3 v_1 \} $. Let $D \in \NN$ be such that $(2 - \beta) D \geq 4$, that $8/(3D(2 - \beta)^2) \leq \varepsilon$, and that $D \leq (2 - \beta) D - 2$. Let $B := \lfloor (2 - \beta) D \rfloor - 1$. Let $b(v) := B$ for all $v \in V$. Let $d(e) := D$ and $w(e) := 1$ for all $e \in E$. Then $d_{\max} = D$, and $(G, b, d, w)$ satisfies the no-bottleneck assumption. We illustrate this instance in \Cref{fig:tight-general-2}.

    \begin{figure}[ht]
        \centering

        \begin{tikzpicture}[
            scale=1,
            vertex/.style={
                circle,
                draw,
                fill=white,
                inner sep=1.5pt,
                minimum size=22pt,
                font=\small
            },
            heavy/.style={thick},
            light/.style={thick, dashed}
        ]
        
        \node[vertex] (v1) at (0,0) {$B$};
        \node[vertex] (v2) at (4,0) {$B$};
        \node[vertex] (v3) at (2,3) {$B$};
        
        \draw[heavy] (v1) -- (v2);
        \draw[heavy] (v2) -- (v3);
        \draw[heavy] (v3) -- (v1);
        
        \begin{scope}[shift={(5,2.65)}]
            \node[inner sep=6pt, align=left, font=\small] at (0,0) {
                \begin{tikzpicture}[baseline=-0.5ex]
                    \draw[heavy] (0,0) -- (0.75,0);
                    \node[right] at (0.85,0) {$(d,w)=(D,1)$};
                    \node at (1.55, -.45) {$B := \lfloor (2 - \beta) D \rfloor - 1$};
                \end{tikzpicture}
            };
        \end{scope}
        
        \end{tikzpicture}
        
        \caption{The instance in the proof of \Cref{lem:tight-general-2}.}
        \label{fig:tight-general-2}
    \end{figure}

    Let $x^\star \in \RR^E$ be defined by $x_e^\star := B/(2D)$ for all $e \in E$. Since $\sum_{e \in \delta(v)} x_e^\star d(e) = 2 \cdot B/(2D) \cdot D = B = b(v)$ for all $v \in V$, we have that $x^\star$ is feasible in \eqref{eq:lp}.

    In any $(b + \beta D, d)$-feasible demand matching, two adjacent edges cannot both be chosen since $B + \beta D = \lfloor (2 - \beta) D \rfloor - 1 + \beta D \leq (2 - \beta)D - 1 + \beta D = 2D - 1 < 2D$. Since $G$ is the triangle, any pair of edges is adjacent. Hence, every optimal $(b + \beta D, d)$-feasible demand matching $M^\star$ consists of exactly one edge, and thus has weight $1$. It follows that the ratio of $w(M^\star)$ to the optimum of \eqref{eq:lp} is at most
    \begin{multline*}
        \frac{w(M^\star)}{\sum_{e \in E} x_e^\star w(e)} = \frac{1}{3B/(2D)} = \frac{2D}{3B} = \frac{2D}{3(\lfloor (2 - \beta) D \rfloor - 1)} \leq \frac{2D}{3((2 - \beta) D - 2)} \\
        = \frac{2}{3(2 - \beta)} \cdot \frac{1}{1 - \frac{2}{D(2 - \beta)}} \leq \frac{2}{3(2 - \beta)} \left(1 + \frac{4}{D(2 - \beta)}\right) \leq \frac{2}{3(2 - \beta)} + \varepsilon,
    \end{multline*}
    where the second inequality follows from the fact that $1/(1 - x) \leq 1 + 2x$ for $x \in [0, 1/2]$ and from the assumption that $(2 - \beta) D \geq 4$. This completes the proof.
\end{proof}

\begin{lemma} \label{lem:tight-nba}
    Let $\beta \in [1, 4/3]$. Let $\varepsilon > 0$. Then there is an instance $(G, b, d, w)$ of the \textsc{Demand Matching} problem such that $w(M^\star)$ is at most $(6/(10 - 3\beta) + \varepsilon)$ times the optimum of \eqref{eq:lp}, where $M^\star$ is an optimal $(b + \beta d_{\max}, d)$-feasible demand matching. Moreover, this instance satisfies the no-bottleneck assumption.
\end{lemma}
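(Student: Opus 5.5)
The plan is to build an instance on which the analysis of \Cref{lem:parametric-nba-weight} and \Cref{cor:parametric-nba-weight} is tight. The target value comes from that analysis. Let $\mathcal C$ be a single triangle on $v_1, v_2, v_3$ with unit weights, so $w(E_C) = 3$. Surround it with ``fixed'' edges of total weight $2$. Arrange that the LP sets the fixed edges to $1$ and each triangle edge to about $(2-\beta)/2$. The LP optimum is then about $2 + 3(2-\beta)/2 = (10-3\beta)/2$. If every $(b+\beta d_{\max},d)$-feasible demand matching has weight at most $3$, the ratio is $6/(10-3\beta)$, as required. The exponent structure also suggests this: the bound $\LPOPT \le w(M_1) + \mu(\beta) w(M_2)$ with $w(M_1)=w(M_2)=3$ is tight exactly in this case.

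For the concrete construction, I would give each $v_i$ a pendant edge $v_iu_i$ of weight $a$ (first try $a = 2/3$) and demand $P$. Each triangle edge gets demand $D$, and $b(u_i)$ is large. Set $b(v_i) := P + \lfloor (2-\beta)D\rfloor - 1$, with $D$ large as in \Cref{lem:tight-general-2}. Three facts then need checking, each a short computation. First, the no-bottleneck assumption holds provided $D \le b(v_i)$ and $P \le b(v_i)$. Second, $x_{v_iu_i} = 1$ together with $x_e = (b(v_i) - P)/(2D) \approx (2-\beta)/2$ on triangle edges is LP-feasible. Third, with capacities $b + \beta d_{\max}$, a vertex $v_i$ cannot carry its pendant and both triangle edges, because $P + 2D > b(v_i) + \beta D$. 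The rounding of $\lfloor\cdot\rfloor - 1$ and a choice of $D$ large with respect to $\varepsilon$ absorb the additive errors, exactly as in the proof of \Cref{lem:tight-general-2}.

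The main obstacle is the integral upper bound: I must show every relaxed-feasible set has weight at most $3$ (up to $\varepsilon$). Enumerating by the set of pendants chosen gives the candidates $3a+1$, $2a+2$, $a+2$ and $3$. The mixed option $2a+2$ takes pendants at $v_1, v_2$ and both edges at $v_3$. With $a = 2/3$ it beats $3$, so uniform pendants do not suffice. To fix this, I would first try to make the triangle itself more constrained: make two triangle edges at a vertex without its pendant feasible, while pendant plus one triangle edge at a vertex remains feasible. Then tune weights nonuniformly, using either different triangle-edge weights or pendant weights, or let the pendant at $v_i$ share a capacity-tight endpoint with another pendant so that the $2a+2$ configuration forces a conflict. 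For each candidate gadget, the verification is a finite case enumeration over which pendants are used. I would then adjust the parameters so that the two extreme configurations, all pendants plus one triangle edge and all three triangle edges, are the only optimal ones, each of weight $3$, while the LP value stays at $(10-3\beta)/2 - O(1/D)$. If a pendant gadget cannot achieve this, the fallback is to replace the triangle by a longer odd cycle with heavier weight concentrated appropriately. The goal there is to keep a maximum $\hat b$-matching at exactly $w(E_C)/3$, mirroring the equality case of $w(M^\star) \ge w(E)/3$ in \Cref{lem:dm-cycle-lp}.
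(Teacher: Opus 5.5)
\paragraph{Gap.} The proposal never actually produces an instance, so the lemma is not proved. Your target numbers are right: an LP value of about $2+3(2-\beta)/2=(10-3\beta)/2$ against an integral value of $3$. You also correctly see that three pendants fail, because of the $2a+2$ configuration. But the repairs are only listed, and the concrete ones cannot work.
\begin{itemize}
\item \emph{Reweighting pendants.} With unit triangle weights and your LP witness, the pendant weights must sum to $2$, so some $a_k\le 2/3$. By your own enumeration, the other two pendants plus both triangle edges at $v_k$ form a relaxed-feasible set of weight $4-a_k\ge 10/3>3$.
\item \emph{Forcing two-edge conflicts.} Making a pendant-free vertex reject two triangle edges, or making two pendants clash at a shared endpoint, is impossible for $\beta\ge 1$. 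The no-bottleneck assumption gives $d(e)+d(f)\le b(v)+d_{\max}\le b(v)+\beta d_{\max}$ for any $e,f\in\delta(v)$, so every relaxed conflict needs at least three edges at one vertex.
\item \emph{Longer odd cycles.} This fallback fails as stated. On $C_{2k+1}$ every edge lies in $k$ of the $2k+1$ maximum matchings, so some matching has weight at least $k\,w(E)/(2k+1)\ge 2w(E)/5>w(E)/3$ when $k\ge 2$.
\end{itemize}

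\paragraph{Missing idea.} Carry the extra weight $2$ on just two unit-weight edges that together cover all three triangle vertices, one of them parallel to a triangle edge: $f_1=v_1v_2$ and $f_2=v_3u$. The paper's construction is as follows.
\begin{itemize}
\item Both $f_j$ get demand $\gamma D$ with rational $\gamma\in(\beta-1,\beta-1+\varepsilon']$. Set $b(v_i)=D$ and $b(u)=\gamma D$.
\item The LP witness is $x_{f_1}=x_{f_2}=1$ and $x_e=(1-\gamma)/2$ on the triangle edges, of value $(7-3\gamma)/2$.
\item Each $v_i$ meets exactly one $f_j$. Since $(2+\gamma)D>(1+\beta)D$, no $v_i$ can hold its $f_j$ together with both triangle edges.
\item Removing any one of the five edges leaves such a forbidden triple at some $v_i$. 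Hence every four-edge set is infeasible, and the relaxed optimum is at most $3$.
\end{itemize}
Merging the pendants at $v_1$ and $v_2$ into the single edge $f_1$ is exactly what kills your $2a+2$ configuration: any mixed solution must now give up a whole unit of fixed weight. Your capacity scheme $b(v_i)=P+\lfloor(2-\beta)D\rfloor-1$, for instance with $P=D$ and $b(u)\ge D$, would also go through on this topology.
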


\begin{proof}
    Let $G = (V, E)$ be a multigraph defined as follows. Let $V := \{ v_1, v_2, v_3, u \}$. Let $E_1 := \{ e_1, e_2, e_3 \}$, where $e_1 := v_1 v_2$, $e_2 := v_2 v_3$, and $e_3 := v_3 v_1$. Let $E_2 := \{ f_1, f_2 \}$, where $f_1 := v_1 v_2$, and $f_2 := v_3 u$. Let $E := E_1 \sqcup E_2$. Let $\varepsilon' := \min\{ 2/3, (10-3\beta)^2 \varepsilon/36 \}$. Let $\gamma \in \QQ \cap (\beta - 1, \beta - 1 + \varepsilon']$. Since $\beta \leq 4/3$ and $\varepsilon' \leq 2/3$, we have $\gamma \leq 1$. Let $D \in \NN$ be such that $\gamma D \in \NN$. Let $b(v_i) := D$ for all $i \in [3]$. Let $b(u) := \gamma D$. Let $d(e) := D$ and $w(e) := 1$ for all $e \in E_1$. Let $d(e) := \gamma D \leq D$ and $w(e) := 1$ for all $e \in E_2$. Then $d_{\max} = D$, and $(G, b, d, w)$ satisfies the no-bottleneck assumption. We illustrate this instance in \Cref{fig:tight-nba}.

    \begin{figure}[ht]
        \centering

        \begin{tikzpicture}[
            scale=1,
            vertex/.style={
                circle,
                draw,
                fill=white,
                inner sep=1.5pt,
                minimum size=22pt,
                font=\small
            },
            heavy/.style={thick},
            light/.style={thick, dashed}
        ]
        
        \node[vertex] (v1) at (0,0) {$D$};
        \node[vertex] (v2) at (4,0) {$D$};
        \node[vertex] (v3) at (2,3) {$D$};
        \node[vertex] (u)  at (4.4,3) {$\gamma D$};
        
        \draw[heavy] (v1) to[bend right=16] (v2);
        \draw[heavy] (v2) -- (v3);
        \draw[heavy] (v3) -- (v1);
        
        \draw[light] (v1) to[bend left=18] (v2);
        \draw[light] (v3) -- (u);
        
        \begin{scope}[shift={(5.5,1.65)}]
            \node[inner sep=6pt, align=left, font=\small] at (0,0) {
                \begin{tikzpicture}[baseline=-0.5ex]
                    \draw[heavy] (0,0) -- (0.75,0);
                    \node[right] at (0.85,0) {$(d,w)=(D,1)$};
                    \draw[light] (0,-0.45) -- (0.75,-0.45);
                    \node[right] at (0.85,-0.45) {$(d,w)=(\gamma D,1)$};
                \end{tikzpicture}
            };
        \end{scope}
        
        \end{tikzpicture}
        
        \caption{The instance in the proof of \Cref{lem:tight-nba}.}
        \label{fig:tight-nba}
    \end{figure}

    Let $x^\star \in \RR^E$ be defined by $x_e^\star := (1 - \gamma)/2$ for all $e \in E_1$, and $x_e^\star := 1$ for all $e \in E_2$. Then
    \begin{align*}
        \sum_{e \in \delta(v_i)} x_e^\star d(e) &= 2 \cdot \frac{1 - \gamma}{2} \cdot D + 1 \cdot \gamma D = D = b(v_i), && \forall i \in [3], \\
        \sum_{e \in \delta(u)} x_e^\star d(e) &= \gamma D = b(u).
    \end{align*}
    Hence, $x^\star$ is feasible in \eqref{eq:lp}.
    
    For the weight of an optimal $(b + \beta d_{\max}, d)$-feasible demand matching, it is not hard to check that any subset $F \subseteq E$ with $|F| \geq 4$ is not a $(b + \beta d_{\max}, d)$-feasible demand matching. Hence, the weight of an optimal $(b + \beta d_{\max}, d)$-feasible demand matching $M^\star$ is at most $3 \cdot 1 = 3$. Therefore, the ratio of $w(M^\star)$ to the optimum of \eqref{eq:lp} is at most
    \begin{multline*}
        \frac{w(M^\star)}{\sum_{e \in E} x_e^\star w(e)} = \frac{3}{(7 - 3\gamma)/2} = \frac{6}{7 - 3\gamma} \leq \frac{6}{7 - 3(\beta - 1 + \varepsilon')} = \frac{6}{10 - 3\beta - 3\varepsilon'} \\
        = \frac{6}{10 - 3\beta} \cdot \frac{1}{1 - \frac{3\varepsilon'}{10 - 3\beta}} \leq \frac{6}{10 - 3\beta} \cdot \left(1 + \frac{6\varepsilon'}{10 - 3\beta}\right) = \frac{6}{10 - 3\beta} + \frac{36\varepsilon'}{(10 - 3\beta)^2} \leq \frac{6}{10 - 3\beta} + \varepsilon,
    \end{multline*}
    where the second inequality follows from the fact that $1 / (1 - x) \leq 1 + 2x$ for $x \in [0, 1/2]$ and from the assumptions that $\varepsilon' \leq 2/3$ and $\beta \leq 4/3$. This completes the proof.
\end{proof}

\Cref{lem:tight-general-2,lem:tight-nba} together prove \Cref{thm:tight-nba}.

\bibliographystyle{abbrvnat}
\bibliography{refs.bib}

\end{document}